\newcommand{\set}[1]{\left\{ #1 \right\}}
\newcommand{\brackets}[1]{\left(#1\right)}
\newcommand{\calO}{\mathcal{O}}
\newcommand{\calD}{\mathcal{D}}
\newcommand{\corr}{\mathit{corr}}
\newcommand{\poss}{\mathit{poss}}
\newcommand{\vect}{\mathbf{t}}
\newtheorem{example}{Example}
\newtheorem{proposition}{Proposition}
\newtheorem{observation}{Observation}
\newtheorem{theorem}{Theorem}
\title{Better Collective Decisions via Uncertainty Reduction}
\author{
Shiri Alouf-Heffetz$^{1,4}$
\and
Laurent Bulteau$^2$\and
Edith Elkind$^3$\and
Nimrod Talmon$^1$\And
Nicholas Teh$^3$
\affiliations
$^1$Ben-Gurion University, Israel\\
$^2$LIGM, CNRS, Universit\'e Gustave Eiffel, France\\
$^3$Department of Computer Science, University of Oxford, UK\\
$^4$TwoFiveOne Research
\emails
shirihe@post.bgu.ac.il,
laurent.bulteau@univ-eiffel.fr,
elkind@cs.ox.ac.uk,
talmonn@bgu.ac.il,
nicholas.teh@cs.ox.ac.uk
}
\begin{document}

\maketitle

\begin{abstract}
    We consider an agent community wishing to decide on several binary issues by means of issue-by-issue majority voting. For each issue and each agent, one of the two options is better than the other. However, some of the agents may be confused about some of the issues, in which case they may vote for the option that is objectively worse for them. A benevolent external party wants to help the agents to 
    select the majority-preferred option for as many issues as possible. This party may have one of the following tools at its disposal:
    (1) educating some of the agents, so as to enable them to vote correctly on all issues, 
    (2) appointing a subset of highly competent agents to make decisions on behalf of the entire group, or 
    (3) guiding the agents on how to delegate their votes to other agents, in a way that is consistent with the agents' opinions. 
    For each of these tools, we study the complexity of the decision problem faced by this external party, obtaining both NP-hardness and fixed-parameter tractability results.

\end{abstract}

\section{Introduction}
A rural community needs to decide whether to install wind turbines on a nearby hill.
Some 40\% of the residents are certain that this is a good idea, because they are committed to renewable energy. About 30\% of the residents are firmly against the proposal, because of the construction noise or the impact on the views from their homes. However, the remaining 30\% of the residents struggle to make up their minds: they are generally in favor of wind power, but worry that the location chosen for the project may be on the migration route of a rare bird species. A local environmental charity wants to help this community to make a decision that is consistent with the majority's preferences. By consulting with scientists, the charity concludes that 
the selected location is unlikely to present a danger for the birds, so building the turbines is the `correct' decision. As the decision will be made by a majority vote at a community meeting, the charity needs to reach out to (some of) the undecided voters and explain that it is in their best interest to vote in favor of the wind turbines. Another approach would be to discourage the confused residents from participating in the meeting, or to suggest to them that they may want to delegate their votes to more competent voters; these strategies may result in decisions that are better for the entire community, including the confused voters themselves.

Inspired by this scenario, in this work we study the problem faced by a benevolent party that wants to help a group of agents to make majority-preferred decisions on a number of binary issues. For each agent and each issue, one of the two options is `correct', in the sense that this is the option the agent would have selected if they could invest time and effort into studying it. However, just as in our example, some of the agents may be uncertain about some of the issues, in which case they may vote against their best interest. Examples of such confusion abound in the real world and are well-documented, e.g., in the context of Brexit or nuclear power.

In our setting, there are multiple issues and the benevolent party does not have its own preferences over the outcomes. Rather, it wants to maximize the number of issues on which the group is guaranteed to make the decision that matches the true preferences of the majority. This party, which may be a charity, an impartial governmental organization, or an ad-hoc working group, may have several tools at its disposal. For instance, it may be able to reach out to a subset of the agents and offer them an opportunity to learn more about the issues (e.g., by presenting information in an accessible manner, or holding a Q\&A session with an expert). Alternatively, it may discourage some of the agents from participating in the vote.

The benevolent party may also help the agents with delegation decisions. Specifically, we assume that the voting mechanism used by the community supports delegation: an agent may delegate her vote to another agent, who will vote on her behalf on all issues. The agents are assumed to be willing to delegate to other agents who have similar preferences (i.e., $i$ would not delegate to $j$ if they disagree on an issue they are both certain about), but are more knowledgeable.
The benevolent party can suggest delegation options to some of the agents, in a way that respects these constraints and maximizes the number of good decisions.

\subsection{Our Contribution}
We develop a formal model that enables us to reason about voters' confusion and ways to mitigate it. In this paper, we focus on independent binary issues, but our ideas can be extended to more complex decision-making scenarios. Our approach is worst-case: we want to maximize the number 
of issues that are `safe', in the sense that the majority-supported outcome is guaranteed no matter how the uncertain voters cast their ballots.
We consider both a general discrete issue space and a one-dimensional setting, where the 
issues are ordered in such a way that, for each agent, the set
of issues on which she would benefit from a positive decision forms an interval
of this ordering, and she is only uncertain about issues that are close to the endpoints of her `positive interval'. For instance, consider a vote over tax rates, and a voter whose policy preferences are consistent with a tax rate in the [25\%, 35\%] range: this voter may confidently reject a proposal to set the tax rate to 15\% or 45\% and support a proposal to set it to 30\%, but may struggle to make up her mind about the proposals in the [23\%, 27\%] range or the [34\%, 38\%] range.

We formulate three computational problems that model, respectively, educating the agents, preventing
some agents from participating, and helping with delegation decisions.
We establish that these problems are NP-hard even in the one-dimensional setting, 
but show that they are fixed-parameter tractable---even in the general setting---both with respect to the number of voters and the number of issues. Moreover, we consider a natural special case of the one-dimensional setting in which all three problems are polynomial-time solvable.
We omit some proofs due to space constraints.

\subsection{Related Work}
%
The analysis of voting outcomes under uncertainty is a prominent
topic in computational social choice (see, e.g., \cite{walsh2007uncertainty,hazon08,bachrach10,wojtas12,kenig19}; 
however, it is usually assumed that it is an external party, rather than the voters themselves, who is uncertain about the votes.
There is a large body of work in political science that considers voter education (see, e.g., \cite{lupia98,bullock11,boudreau19} and the references therein);
however, it does not engage with the associated algorithmic issues.
The computational problem associated with removing voters is closely related to election control by deleting voters (see, e.g., the survey of Faliszewski and Rothe~\shortcite{faliszewski2016control}); however, in the control literature, it is assumed that the party engaging in voter deletion pursues its own goals rather than tries to implement the popular opinion.
Finally, in the context of vote delegation, we mention works on liquid democracy from the perspective of discovering a good outcome \cite{brill2018pairwise,bloem19,kahng2021liquid,golz21}
in particular, Cohensius \textit{et al.}~\shortcite{cohensius2016proxy} and Green-Armytage~\shortcite{green2015direct} consider vote delegation in the one-dimensional Euclidean domain.


\section{Preliminaries}\label{sec:prelim}

We will now define our model and formulate the computational problems that we are going to study.

\paragraph{Voting instances}
There is a {\em proposal space} $P$ and a set of $n$ {\em agents} $N$.
For each agent $i\in N$, the space $P$ is split as $P=P^+(i)\cup P^-(i)$ with $P^+(i)\cap P^-(i)=\varnothing$; the proposals in $P^+(i)$ are {\em beneficial} for $i$, so $i$ would want to approve them, while proposals in $P^-(i)$ are not beneficial for $i$.
Also, for each agent $i\in N$, the space $P$ is split as $P=P^!(i)\cup P^?(i)$ with $P^!(i)\cap P^?(i)=\varnothing$; agent
$i$ is {\em certain} about the proposals in $P^!(i)$ and \emph{uncertain} about the proposals in $P^?(i)$. 
Thus, an agent $i$ can be described by a tuple 
$\pi(i) = (P^+(i), P^!(i))\in P\times P$, which we will call agent $i$'s {\em belief}; we write $\Pi=(\pi(1), \dots, \pi(n))$ and refer to the triple $I = (P, N, \Pi)$ as a {\em voting instance}.
For $i\in N$ and $*\in\{+, -\}$, $\dagger\in\{!, ?\}$, let $P^{*\dagger}(i)=P^*(i)\cap P^\dagger(i)$. 
When casting an approval ballot, agent~$i$ will vote for all proposals in $P^{+!}(i)$ and against all proposals in $P^{-!}(i)$. Her vote over proposals in $P^?(i)$, however, may be arbitrary; in particular, she may disapprove proposals in $P^{+?}(i)$ and approve proposals in $P^{-?}(i)$.

\paragraph{Correct and possible outcomes}
For each $p\in P$ and 
$\star\in\{+, -, ?, !\}$, let
$N^\star(p) = \{i\in N: p\in P^\star(i)\}$.
We say that a proposal $p$ is {\em good} if
$|N^+(p)| \ge |N|/2$ and {\em bad} if
$|N^-(p)| \ge |N|/2$. Note that a proposal
can be both good and bad: this happens if it is beneficial for exactly $|N|/2$ agents.
An outcome $z\in\{0, 1\}$ is {\em correct for $p$} if $p$ is good and $z=1$ (`approve'), or if $p$ is bad and $z=0$ (`disapprove').
We denote the set of correct outcomes for $p$ by $\corr(p, I)$.
Because of agent uncertainty, 
the outcome of an approval vote 
is not always correct:
we say that $1$ (respectively, $0$) is a {\em possible outcome} for $p$ if 
$|N^+(p)\cup N^?(p)|\ge |N|/2$
(respectively, if $|N^-(p)\cup N^?(p)|\ge |N|/2$).
Note that $\corr(p, I)\subseteq \poss(p, I)$ for each $p\in P$, 
but for some $p\in P$ we may have 
$\poss(p, I)\setminus\corr(p, I)\neq\varnothing$.

\paragraph{Safe proposals and safe zones}
Given a voting instance $I=(P, N, \Pi)$, we say that a proposal $p\in P$ is \emph{safe} if $\poss(p, I)=\corr(p, I)$ and \emph{unsafe} otherwise.
The set of safe proposals 
is called the \emph{safe zone} for $I$.

\begin{example}\label{ex:3agents}
Consider an instance $I$ 
with $N = \{1, 2, 3\}$, $P=\{p_1, p_2, p_3, p_4, p_5\}$, 
and $\Pi$ given by
$P^+(1)=\{p_2, p_3, p_4\}$, $P^?(1)=\{p_2, p_4\}$,
$P^+(2)=\{p_2, p_3\}$,  $P^?(2)=\{p_1\}$,
$P^+(3)=\{p_1, p_2, p_3\}$,
and $p^?(3)=\varnothing$.
%

The first agent benefits from $p_3$ and knows this, 
so she is going to approve $p_3$. She also benefits
from $p_2$, but is uncertain about this proposal, so she may disapprove it. 
We have $\corr(p_1, I)=\{0\}$, since agents $1$ and $2$ do not benefit from $p_1$ (so $N^-(p_1)=\{1, 2\}$, $N^+(p_1)=\{3\}$). However, $\poss(p_1,I) = \{0, 1\}$, 
because agent~2 is uncertain about $p_2$ (i.e., $p_2\in P^?(2)$) and may vote either way on it (i.e., $|N^+(p_1)\cup N^?(p_1)|=2$, $|N^-(p_1)\cup N^?(p_1)|=2$). Hence, $p_1$ is not safe. The other four proposals, however, are safe.
\end{example}

\subsubsection{Proposal Spaces}
In general, our framework allows for infinite proposal spaces.
However, in this work we will focus on settings where the proposal space is \emph{finite}, i.e., $P=\{p_1, \dots, p_m\}$.
A natural restricted case of our model is a one-dimensional setting, 
where $P$ is ordered as $p_1\prec\dots\prec p_m$, and agent preferences respect this order, as described below. 
We say that a set of proposals 
$P'\subseteq P$ forms an {\em interval of $\prec$} if 
$P'=\varnothing$ or $P'=\{p_\ell: j\le \ell\le k\}$ for some $1\le j\le k\le m$. 
In the one-dimensional setting we assume that, for each agent $i\in N$, the set $P^+(i)$ forms an interval of $\prec$.
This approach takes inspiration from the notion of interval approval domains~\cite{approvaldomains}.

Furthermore, in the one-dimensional case we expect an agent to be certain that she dislikes extreme proposals and likes proposals in the center of her interval; however, she may be uncertain about proposals that are close to the endpoints of her interval. Hence, we assume that, for each $i\in N$ with $P^+(i)=\{p_j, \dots, p_k\}$, we have $P^?(i)=P^?_L(i)\cup P^?_R(i)$, where both $P^?_L(i)$ and $P^?_R(i)$ are intervals of $\prec$ that satisfy the following conditions:
(1) $P^?_L(i)=\varnothing$ or $P^?_L(i)\cap\{p_{j-1}, p_j\}\neq\varnothing$; 
(2) $P^?_R(i)=\varnothing$ or $P^?_R(i)\cap\{p_{k}, p_{k+1}\}\neq\varnothing$.

\begin{example}\label{ex:1D}
Note that the instance $I$ from Example~\ref{ex:3agents} is one-dimensional.
However, if we were to add either of the two agents $4, 5$
with $P^+(4) = \{p_1, p_5\}$, $P^?(4)=\varnothing$, 
$P^+(5)=\{p_2, p_3, p_4\}$, $P^?(5)=\{p_3\}$,
this would no longer be the case:
$P^+(4)$ is not an interval, and $P^?(5)$ fails conditions (1) and~(2).
\end{example}

As the one-dimensional setting is essentially a domain restriction, algorithms for general finite proposal spaces apply in the one-dimensional setting, while NP-hardness results for the one-dimensional setting imply hardness for general spaces.
Therefore, in what follows, for all problems we consider, 
we prove hardness results for the one-dimensional case (Section~\ref{sec:NP}) and develop FPT 
algorithms for the general case (Section~\ref{sec:FPT}).
Additionally, in Section~\ref{sec:rad} we present
polynomial-time algorithms for a special, more restricted case of 
the one-dimensional model.

\subsubsection{Algorithmic Challenges}
%
%
We consider three approaches to eliminating uncertainty:
  (1) \textit{educating} agents,
  (2) \textit{removing} agents, and 
  (3) \textit{delegating} votes (known as {\em liquid democracy}).
In each case, the goal is to maximize the size of the safe zone; an important special case of this task is to ensure that all projects are safe.

\paragraph{Educating agents}
In the first approach we consider, at a unit cost, we can educate one agent so as to entirely remove her uncertainty: if $i$ has been educated, then the set $P^?(i)$ becomes empty, so, when voting, agent $i$ will approve proposals in $P^+(i)$ and disapprove proposals in $P^-(i)$.

\begin{tcolorbox}
\textsc{Educating Agents Problem (EAP)}:\\
\textbf{Input}: A voting instance $I = (P, N, \Pi)$, a budget $\kappa$, and a parameter $\lambda\in \mathbb N$.\\
\textbf{Question}: Can we educate at most $\kappa$ agents in $N$ so that in the resulting instance $I'$ the number of safe proposals is at least $\lambda$?
\end{tcolorbox}
\noindent Note that the input to EAP contains the list $\Pi$; that is, we assume that, when deciding which agents to educate, we know which proposals are beneficial for them, i.e., we know the sets $P^+(i)$, $P^-(i)$ for all $i\in N$. Of course, this is not always the case; in particular, the agents themselves can be assumed to know the sets $P^{+!}(i)$, $P^{-!}(i)$ and $P^?(i)$, but not $P^+(i)$ and $P^-(i)$. However, we expect that in practice we can usually make good predictions based on, e.g., demographic criteria.

\paragraph{Denying access}
The next approach we consider is to prevent up to $\kappa$ agents
from participating in the vote (equivalently, this approach can be viewed as selecting a committee). However, we still want to make decisions in a way that is beneficial to the majority of all agents, including those who are not invited to vote. That is, we consider the set of correct outcomes for each proposal with respect to the original voting instance $I$ and the set of possible outcomes with respect to the modified instance $I'$, with up to $\kappa$ agents removed.
Note that it may be the case that $\corr(p, I)=\{0, 1\}$, 
but $\poss(p, I')$ is a singleton; we consider such outcomes as acceptable.

\begin{tcolorbox}
\textsc{Denying Access Problem (DAP)}:\\
\textbf{Input}: A voting instance $I = (P, N, \Pi)$, a budget $\kappa$, and a parameter $\lambda\in \mathbb N$.\\
\textbf{Question}: Can we remove at most $\kappa$ agents from $N$ 
to obtain an instance $I'$ such that the number of proposals 
$p\in P$ with $\corr(p, I)\supset\poss(p, I')$ is at least~$\lambda$?
\end{tcolorbox}
\noindent Note that, to solve DAP, we do not need full access to the 
`hidden' sets $P^+(i), P^+(i)$; it suffices to have access to the `known' sets $P^{+!}(i)$, $P^{-!}(i)$ and $P^?(i)$ for each $i\in N$ as well as the correct outcome for each proposal (which we expect to be easier to estimate than individual preferences).

\paragraph{Allowing delegations}
The last approach we consider is to allow the agents to delegate their votes to other agents, transitively. An agent that accumulates $t$ votes participates in the election with voting weight $t$.
This framework is known as {\em liquid democracy}~\cite{green2015direct}.
We assume that our agents are rather conservative in their delegation decisions: we say that agent $i$ is {\em willing to delegate her vote to agent $j$} if 
(1) $P^{+!}(i)\cap P^{-!}(j)=\varnothing$ and
$P^{-!}(i)\cap P^{+!}(j)=\varnothing$, and 
(2) $|P^!(i)| < |P^!(j)|$. Condition~(1) indicates
that agent $i$ cannot observe any disagreement between herself and agent $j$, and condition~(2) indicates that agent $j$ is more knowledgeable than agent $i$.
A {\em delegation graph consistent with an instance $I$} is a directed acyclic graph $\calD$ over the vertex set $N$ such that the outdegree of each vertex is at most $1$; it may contain an arc $(i, j)$ only if agent $i$ is willing to delegate to agent $j$. Given a graph $\calD$, for each $i\in N$ we denote the unique sink reachable from $i$ by $s(i)$; the agent $s(i)$ is the `guru' of agent $i$, so that $i$ will adopt the beliefs of $s(i)$.
The graph $\calD$ corresponds to a modified instance $I'(\calD)=(N, P, \Pi')$ 
such that $\pi'(i)=\pi(s(i))$ for each agent $i\in N$; intuitively, in $I'(\calD)$ agent $i$ copies the ballot of her guru $s(i)$. We want to construct a delegation graph that results in decisions that benefit the majority of the agents for as many proposals as possible.

\begin{tcolorbox}
\textsc{Constrained Liquid Democracy (CLD)}:\\
\textbf{Input}: A voting instance $I = (P, N, \Pi)$, and a parameter $\lambda\in \mathbb N$.\\
\textbf{Question}: Is there a delegation graph $\calD$ consistent with $I$ such that the number of proposals 
$p\in P$ with $\corr(p, I)\supset\poss(p, I'(\calD))$ is at least $\lambda$?
\end{tcolorbox}
\noindent Importantly, in a way, our model implicitly assumes that the delegation decisions are made in a centralized manner; however, no agent can be asked to delegate their vote in a way that is unacceptable to them.

\begin{example}
Consider again the instance $I$ of Example~\ref{ex:3agents}.
Note that, in $I$, we can make all proposals safe by educating agent~$2$.
If we can only remove agents, 
then the only way to ensure the correct outcome on $p_1$ 
is to remove agents $2$ and $3$; however, if we do so, 
we may get an incorrect outcome on $p_2$ and $p_4$, 
so no removal strategy can ensure the correct outcome on all proposals.

Finally, observe that agent~$1$ can delegate to agent~$2$ (they agree 
on the two proposals on which they are both certain, i.e., 
$p_3$ and $p_5$, and $|P^!(2)|>|P^!(1)|$), and 
agent~$2$ can delegate to agent~$3$, 
but agent~$1$ cannot directly delegate to agent~$3$.
However, if $1$ delegates to $2$ and $2$ delegates to $3$, then agent~$3$ becomes the guru for agent~$1$. No delegation strategy ensures a correct outcome on all proposals.
\end{example}

\section{Intractability Results}\label{sec:NP}

It follows from our definitions that checking whether a given proposal
is safe or determining the number of safe proposals can be done in polynomial time.
Next we show that EAP, DAP, and CLD are NP-complete.
Containment in NP follows immediately;
thus, in what follows, we focus on showing that 
these problems are NP-hard.
Perhaps surprisingly, our hardness results hold even for the one-dimensional setting. 
\begin{theorem}\label{thm:1D-EAP_is_hard}
  {\sc EAP} is {\em NP}-complete, even in the one-di\-men\-si\-onal 
  proposal space.
\end{theorem}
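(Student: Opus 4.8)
The plan is to reduce from a standard NP-hard problem such as \textsc{Vertex Cover} (or \textsc{Set Cover}/\textsc{Dominating Set}), exploiting the fact that educating an agent is the natural analogue of ``selecting'' her into a cover. Given a \textsc{Vertex Cover} instance $(G=(V,E),k)$, I would build a one-dimensional voting instance in which each edge $e\in E$ gives rise to a gadget proposal (or a short block of proposals) $p_e$ that is unsafe precisely because of the uncertainty of the two agents corresponding to the endpoints of $e$, and educating either endpoint-agent resolves that uncertainty, hence makes $p_e$ safe. Setting the budget $\kappa=k$ and the target $\lambda$ equal to (number of edge-proposals) $+$ (number of ``always-safe'' filler proposals), a set of at most $\kappa$ educated agents makes all edge-proposals safe if and only if the corresponding vertex set is a vertex cover of size at most $k$.

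The technical heart is engineering the numbers so that (i) for each edge $e=\{u,v\}$, the proposal $p_e$ has $\corr(p_e,I)$ a singleton but $\poss(p_e,I)=\{0,1\}$, with the ``slack'' in the possible-outcome count coming \emph{only} from the uncertain votes of agents $u$ and $v$, so educating \emph{at least one} of $u,v$ tips $|N^+(p_e)\cup N^?(p_e)|$ or $|N^-(p_e)\cup N^?(p_e)|$ below $|N|/2$; and (ii) the whole construction respects the one-dimensional constraints --- each agent's $P^+(i)$ is an interval of $\prec$, and her uncertainty set is confined to (at most) two intervals hugging the endpoints of that interval. To meet (i) I would add, for each vertex $v\in V$, a ``base'' population of auxiliary always-certain agents whose beneficial intervals are chosen to make each $p_e$ exactly balanced modulo the two relevant uncertain voters; padding agents (certain about everything, with empty or extreme $P^+$) can be used to fix the parity of $|N|$ and to control the comparison with $|N|/2$.

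The main obstacle --- and the step I would spend the most care on --- is \textbf{making the gadgets fit the one-dimensional domain simultaneously for all edges}: a single agent $v$ must participate in the uncertainty of $p_e$ for \emph{every} edge $e$ incident to $v$, yet her uncertain proposals must all lie within two short intervals at the ends of one contiguous interval $P^+(v)$. The natural fix is to lay the edge-gadgets out along the line in a carefully chosen order (e.g., process edges so that for each vertex the incident edge-gadgets are consecutive, or give each vertex $v$ its own stretch of the line and make $P^+(v)$ a long interval covering all of $v$'s edge-gadgets, with the uncertainty pushed to where $v$'s interval meets its neighbours' gadgets). I would likely need a more structured source problem --- for instance \textsc{Vertex Cover on graphs of bounded degree}, or \textsc{(3-)SAT} with a path-like incidence structure, or an interval-graph-flavoured covering problem --- so that the incidence pattern is itself ``one-dimensional'' and the intervals can be nested/aligned consistently. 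Once the layout is fixed, verifying correctness is routine: in the forward direction educating a vertex cover kills every edge-gadget's uncertainty; in the backward direction, if some edge $e=\{u,v\}$ is uncovered then both $u$ and $v$ remain uncertain about $p_e$, the balanced counts still admit both outcomes, so $p_e$ stays unsafe and at most $\lambda-1$ proposals are safe. Containment in NP is immediate, as the excerpt already notes that counting safe proposals is polynomial.
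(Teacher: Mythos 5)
Your overall strategy matches the paper's: a reduction from a vertex-cover-type problem in which each edge becomes a proposal whose uncertainty is contributed exactly by the two endpoint agents, with balancing agents tuning the margins so that educating either endpoint makes the proposal safe. You also correctly identify the one genuinely hard step, namely arranging the edge-proposals on a line so that every vertex-agent's uncertain set decomposes into at most two intervals hugging the ends of a single contiguous $P^+$ interval. However, you leave that step unresolved: you float candidate source problems (bounded-degree \textsc{Vertex Cover}, \textsc{3-SAT} with path-like incidence, an interval covering problem) without exhibiting one that actually has the required structure and proving it NP-hard. This is a genuine gap, not a routine detail. In particular, plain cubic \textsc{Vertex Cover} does not suffice: a degree-$3$ vertex has three incident edge-proposals that must be packed into only two uncertainty intervals, so two of them must be made consecutive in the ordering, and this must hold \emph{simultaneously} for all vertices --- a global combinatorial constraint on the edge set that does not come for free.

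The paper closes this gap by introducing an intermediate problem, \textsc{Vertex Cover on $2$-path-colored graphs}: the edge set is partitioned as $E=E_1\cup E_2$ so that each component of $(V,E_h)$ is a single edge or a two-edge path, and every vertex meets both classes. Proposals are ordered with all of $E_1$ before all of $E_2$, and edges sharing a vertex within a class receive consecutive indices; then a degree-$3$ vertex is the center of a two-edge path in one class, so its three uncertain proposals form one interval of length two in that class's block and one singleton in the other block --- exactly the two permitted intervals. Proving that this restricted vertex-cover variant is itself NP-hard is nontrivial (the paper uses a proper edge-coloring via Vizing's theorem and replaces each edge by a $3$-edge path with alternating colors, then tracks how cover sizes transform). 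Without this intermediate result, or an equivalent structural lemma, your reduction cannot be instantiated, so the forward and backward directions you describe --- which are otherwise correct in spirit --- have nothing to act on. To complete the argument you would need to either reconstruct something like the $2$-path-coloring machinery or find a different source problem whose incidence structure is provably linearizable in the required sense.
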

\begin{proof}
    We provide a reduction from {\sc Vertex Cover on Cubic Graphs (VC-CG)}. An instance of this problem comprises of  an undirected cubic graph $G=(V, E)$ (i.e., the degree of each vertex $v\in V$ is exactly $3$) and an integer $\kappa$; it is a yes-instance if $G$ admits a vertex cover of size~$\kappa$ (i.e., a subset $V'\subseteq V$ that contains at least one endpoint of each edge), and a no-instance otherwise. This problem is known to be NP-complete~\cite{GJS74}. We will now show that an even more restricted variant of vertex cover is NP-hard, and then describe a reduction from it to EAP.
    
    We say that a graph $G=(V, E)$ is \emph{$2$-path-colored} if
    we can partition $E$ as $E=E_1\cup E_2$ so that
    \begin{enumerate}
        \item For each $h\in \{1,2\}$, each connected component of $(V,E_h)$ is an edge or a two-edge path; 
        \item Each $v\in V$ is incident to at least one edge in each of $E_1$ and $E_2$, and to at most three edges in total;
        \item $|E_1|=|E_2|$.
    \end{enumerate}
    In the problem {\sc Vertex Cover on $2$-path-colored graphs (VC-2PC)} we are given a $2$-path-colored graph $G$ together with the partition $E=E_1\cup E_2$ and a parameter $k$; the goal is to decide if $G$ admits a vertex cover of size~$k$.
    Then, we have the following proposition; of which the proof is deferred to the Appendix.
    \begin{proposition}
      {\sc VC-2PC} is {\em NP}-hard.
    \end{proposition}
    
    Then, to prove NP-hardness of EAP, we describe a reduction from {\sc VC-2PC}. 
    Given a $2$-path-colored graph $G=(V, E)$
    with $|V| = n$, $|E| = 2m$, partition $E = E_1 \cup E_2$, and a parameter $\kappa$,
    we construct an instance of EAP containing $2m$ proposals, 
    so that each proposal is approved by a narrow majority.
    
    For $h=1, 2$, we denote the edges of $E_h$ by $e_{h, 1}, \dots, e_{h, m}$.
    We number them so that, if $e_{h,j}$ and $e_{h,j'}$ share a vertex, then $|j-j'|=1$. We create one proposal $p_{h,j}$ for each edge $e_{h,j}\in E_1\cup E_2$. The proposals are ordered 
    as 
    $$
    p_{1,1}\prec \ldots  \prec p_{1,m} \prec 
    p_{2,1}\prec \ldots  \prec p_{2,m}.
    $$
    Let $P$ denote the set of all proposals.
    
    We introduce two sets of agents: the \emph{coding set} contains $n$ agents encoding the graph (one for each vertex of $V$) and the \emph{balancing set} contains $O(nm)$ agents that help set the majority correctly on each proposal. 
    
    Specifically, to build the coding set, 
    we create an agent $i$ for each vertex $v_i\in V$. The set $P^?(i)$
    consists of proposals corresponding to the edges incident to $v_i$.
    Note that $P^?(i)$ forms at most two intervals of the order $\prec$. Indeed,  this is immediate if $v_i$ has degree two. If $v_i$ has degree three, then it is the center of a $2$-edge path in some $E_h$, and the edges of this path are numbered as $e_{h, j}, e_{h_{j+1}}$ for some $j\in [m-1]$. The set $P^+(i)$ consists of all proposals 
    in between the leftmost proposal in $P^?(i)$
    and the rightmost proposal in $P^?(i)$ (inclusive).
    
    Let $Z=\max_{p\in P}|N^+(p)|$. 
    We build the balancing set in two steps.
    First, for each $p\in P$ we create
    $Z-|N^+(p)|$ agents $i$ with $P^+(i)= \{p\}$, $P^?(i)=\varnothing$.
    After this step, for each $p\in P$ we have $|N^+(p)|=Z$, and 
    exactly $2$ agents in $N^+(p)$ are uncertain about $p$. 
    If there are $X$ agents in total at that point, 
    then for each $p\in P$
    we have $|N^-(p)|=X-Z$; let $Y=X-Z$. At the second step, 
    if $Z< Y+3$, then we add $Y+3-Z$ agents $i$ with $P^{+!}(i)=P$, and
    if $Z>Y+3$, then we add $Z-3-Y$ agents $i$ with $P^{-!}(i)=P$.
    After this step, for each proposal $p\in P$, we have 
    $|N^+(p)| = |N^-(p)| +3$;
    two agents in $N^+(p)$ are uncertain about $p$.
    
    Thus, even though each proposal $p\in P$ is good, 
    it may receive $|N^-(p)|+2$ approvals 
    and $|N^+(p)|-2<|N^-(p)|+2$ disapprovals, 
    so $0$ is a possible outcome. However, if one of the agents who are uncertain about $p$ is educated, then at least $|N^+(p)|-1$ agents approve $p$ and at most $|N^-(p)|+1<|N^+(p)|-1$ agents disapprove it, so $1$ becomes the only possible outcome. That is, to make safe a proposal $p$ that corresponds to an edge $e$ we need to educate one of the two agents that correspond to the endpoints of $e$. Hence, we can make all proposals safe by educating $\kappa$ agents if and only if $G$ admits a vertex cover of size $\kappa$.
\end{proof}

Proving NP-hardness for DAP is similar to that for EAP (albeit with a slight modification), and hence we defer the proof of the following Theorem to the Appendix.

\begin{theorem}\label{thm:1D-DAP_is_hard}
  {\sc DAP} is {\em NP}-complete, even in the one-di\-men\-si\-onal 
  proposal space.
\end{theorem}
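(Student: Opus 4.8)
The plan is to adapt the reduction behind Theorem~\ref{thm:1D-EAP_is_hard}, once more reducing from {\sc VC-2PC}. Given a $2$-path-colored graph $G=(V,E)$ with $|V|=n$, $|E|=2m$, partition $E=E_1\cup E_2$, and a parameter $\kappa$, I would keep the same ordered proposal space $p_{1,1}\prec\dots\prec p_{1,m}\prec p_{2,1}\prec\dots\prec p_{2,m}$ (one proposal per edge, with the edges of each colour class numbered so that incident ones get consecutive indices) and the same \emph{coding set} of $n$ vertex-agents, where agent $i$ is uncertain precisely about the proposals of the edges incident to $v_i$ and approves the smallest $\prec$-interval containing them (which forms at most two intervals of $\prec$, exactly as argued for EAP). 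The modification is confined to the \emph{balancing set}: instead of making every $p\in P$ good by a margin of three, I would make it good by a margin of four with $|N|$ \emph{even}. Since each edge-proposal $p$ then still has exactly two uncertain supporters, the worst case---both of them disapproving---gives $|N^-(p)|+2=|N|/2$ approvals against $|N|/2$ disapprovals, so $0\in\poss(p,I)$ although $\corr(p,I)=\{1\}$; hence every edge-proposal is unsafe in $I$.

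The gadget then behaves as follows. Deleting one of the two vertex-agents at the endpoints of an edge $e$ removes an uncertain supporter of $p=p_e$: the instance $I'$ so obtained has $|N^+(p)|=|N^-(p)|+3$, an \emph{odd} electorate, and a single uncertain supporter of $p$, so $p$ receives at least $|N^-(p)|+1$ approvals and at most $|N^-(p)|$ disapprovals---outcome $0$ is no longer possible and $p$ becomes safe (and one checks it remains safe if the second endpoint is deleted as well). Consequently, deleting the $\kappa$ vertex-agents of a size-$\kappa$ vertex cover of $G$ makes all $2m$ edge-proposals safe, so DAP with $\lambda=2m$ is a yes-instance. For the converse I would show that every deletion set $S$ with $|S|\le\kappa$ that makes all $2m$ edge-proposals safe must contain, for each edge $e$, a vertex-agent at an endpoint of $e$, so that $\{v_i : i\in S\}$ is a vertex cover of $G$ of size at most $\kappa$.

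The crux---and the reason a verbatim copy of the EAP argument does not suffice---is this converse. Deleting an agent shrinks both a proposal's support and the electorate, so deleting a \emph{plain disapprover} of $p_e$ has exactly the same effect on whether $0\in\poss(p_e,\cdot)$ as deleting one of $p_e$'s uncertain supporters, while deleting a certain \emph{supporter} of $p_e$ pushes in the opposite direction. To keep the reduction sound the balancing gadget (and, if needed, the proposal ordering) must therefore be engineered so that (i) no agent's deletion can make several edge-proposals safe simultaneously---in particular the padding has to be realised with global approvers, whose deletion only ever hurts, rather than with agents that disapprove large blocks of proposals---and (ii) deleting the vertex-agents of a (suitably chosen) vertex cover never over-shrinks the support of any edge-proposal past its safety threshold. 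Establishing that such a balancing set exists while keeping the instance one-dimensional and the margins exactly as required is the main technical work; with it in hand, the two-way correspondence with vertex covers of $G$ goes through just as in the proof of Theorem~\ref{thm:1D-EAP_is_hard}.
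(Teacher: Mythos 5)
Your sketch stops exactly where the real difficulty begins, and the part you defer as ``the main technical work'' is in fact the entire content of the paper's proof. You correctly diagnose the obstruction: in DAP, deleting a certain disapprover of $p_e$ helps make $p_e$ safe just as much as deleting one of its uncertain supporters, while deleting a certain approver hurts. But your proposed fix --- keep the EAP proposal space and ``engineer'' the balancing set so that no deletion other than a coding agent can help --- cannot be carried out as stated. The EAP balancing set necessarily contains, for each proposal $p$, agents with $P^+(i)=\{p\}$ (to equalize support counts, which differ across proposals because the coding agents' positive intervals have different lengths); every such agent is a certain disapprover of all $2m-1$ other proposals, so deleting a handful of them makes large blocks of proposals safe at once, independently of any vertex cover. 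Restricting the padding to global approvers, as you suggest, leaves no way to equalize $|N^+(p)|$ across proposals. Even your forward direction is not sound: after deleting the $\kappa$ cover agents, a proposal $p_e$ is safe only if strictly more than $\kappa/2$ of the deleted agents lie outside $N^{+!}(p_e)$, and a vertex $w\in K$ not incident to $e$ may well be a certain approver of $p_e$ (if $p_e$ sits in the interior of $w$'s interval), so nothing guarantees this.

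The paper's proof resolves this with a genuinely different construction rather than a re-tuned balancing set. It doubles every proposal into a pair $(a_x,b_x)$ with $a_x$ good and $b_x$ bad, adds a middle pair, and builds each vertex gadget from \emph{three} agents $c_i,d_i,e_i$ whose positive intervals partition the proposal line, so that deleting the whole triple removes exactly one approver and two disapprovers from every proposal. The budget becomes $4\kappa$ (all of a set $A$ of $\kappa$ globally uncertain agents plus $\kappa$ triples), and the converse direction is a counting argument: requiring both $a_x$ and $b_x$ to be safe yields $\delta_x\ge\beta_x+1$ for each pair, which forces all of $A$ to be deleted, forces the number of deleted ``positive-side'' and ``negative-side'' agents to be equal via the middle pair, and only then pins the remaining at most $\kappa$ deletions in $C$ to a vertex cover. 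None of this machinery is present or replaceable in your outline, so the proof has a genuine gap rather than being a minor variant of the EAP argument.
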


Finally, we prove NP-hardness for CLD with the following Theorem.

\begin{theorem}\label{thm:1D-CLD_is_hard}
  {\sc CLD} is {\em NP}-complete, even in the one-di\-men\-si\-onal 
  proposal space.
\end{theorem}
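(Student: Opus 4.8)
The plan is to reduce from the same intermediate problem used for EAP, namely {\sc VC-2PC}, reusing as much of the EAP construction as possible. Given a $2$-path-colored graph $G=(V,E)$ with $E=E_1\cup E_2$, $|E|=2m$, and parameter $\kappa$, I would again build $2m$ proposals $p_{1,1}\prec\dots\prec p_{1,m}\prec p_{2,1}\prec\dots\prec p_{2,m}$, one per edge, ordered so that edges sharing a vertex get consecutive proposals; a coding agent $i$ for each vertex $v_i$ with $P^?(i)$ the (at most two) intervals corresponding to incident edges and $P^+(i)$ their convex hull; and a balancing set of $O(nm)$ agents so that each proposal is approved by exactly three more agents than disapprove it, with exactly the two incident coding agents uncertain about it. Then each $p$ is good, but $0$ is a possible outcome unless at least one of its two uncertain coding agents has their uncertainty about $p$ removed.

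The key difference is the mechanism for removing uncertainty: instead of educating, a coding agent $i$ can delegate to some agent $j$ with $|P^!(i)|<|P^!(j)|$ and no observable disagreement, thereby copying $j$'s (certain) ballot. To make delegation work as a proxy for "cover," I would add a small set of \emph{expert} agents who are certain about everything and whose $P^{+!}$ agrees with all coding agents on the proposals the coding agents are certain about (which is easy, since coding agents are certain only about the two or three incident edges, and all proposals are good so a "vote-yes-on-everything" expert never observably disagrees with anyone). An expert should have a strictly larger $|P^!|$ than every coding agent (pad with proposals), so every coding agent is willing to delegate to every expert. Picking a vertex $v_i$ into the cover then corresponds to agent $i$ delegating to an expert; after delegation $i$ votes correctly (yes) on all incident proposals, so each such proposal becomes safe. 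Conversely, a delegation graph making all proposals safe must, for each edge, have at least one endpoint agent delegate (transitively) to a non-uncertain guru, which yields a vertex cover. To control the \emph{number} of agents who effectively "cover" — i.e., to enforce the budget $\kappa$ — I would instead phrase CLD's parameter $\lambda$ so that "all proposals safe" forces exactly the cover condition, and add a separate gadget (a long path of blocker proposals, each made unsafe by a dedicated agent, where that agent can only be neutralized by delegating, and only $\kappa$ "delegation slots" to experts are available because each expert can absorb at most a bounded number of delegators while staying a valid guru).

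The main obstacle I anticipate is enforcing the cardinality constraint: CLD has no explicit budget, only the objective $\lambda$, so the number of coding agents that delegate must be limited indirectly. The cleanest route is a \emph{seat-counting} gadget: introduce $\kappa$ "expert" agents $g_1,\dots,g_\kappa$ and arrange, via auxiliary proposals and the $|P^!|$ ordering, that (i) each $g_t$ has strictly larger knowledge than every coding agent so coding agents may delegate to experts, (ii) experts cannot delegate to anyone (make them maximal in the knowledge order), (iii) the coding agents cannot usefully delegate to each other for covering purposes (e.g., ensure $|P^!(i)|$ is the same for all coding agents, so condition (2) rules out coding-to-coding delegation entirely — this is already essentially the case since every vertex has degree $2$ or $3$, so I would equalize by padding the degree-$2$ agents' $P^!$ with one extra certain proposal, and double-check the one-dimensional interval conditions still hold). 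With coding-to-coding delegation forbidden, the only way a coding agent changes its ballot is by delegating to an expert, and since experts are sinks, at most $\kappa$ coding agents can have an expert as guru only if... here is the subtlety: arbitrarily many coding agents can delegate to the \emph{same} expert. To truly cap the count at $\kappa$, I would instead make each expert useful to only one coding agent (e.g., expert $g_t$'s $P^{+!}$ observably disagrees with every coding agent except a "fresh" dummy position, forcing experts to be the only valid gurus and limiting how many distinct coding-agent ballots can be repaired) — or, more robustly, drop the hard cap and let CLD directly ask for $\lambda=2m$ safe proposals while making the coding-agent structure such that covering all edges is equivalent to a size-$\kappa$ vertex cover via a counting argument on which agents become "yes-on-everything." I expect to need one of these calibrations to make the $\kappa \leftrightarrow$ (number of delegating coding agents) correspondence exact; everything else (goodness of proposals, the one-dimensional interval conditions, willingness-to-delegate conditions) mirrors the EAP proof and should go through with only routine checking. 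The correctness argument then splits as usual: a size-$\kappa$ cover $V'$ gives the delegation graph "$i\to g$ for $v_i\in V'$" making every proposal safe; conversely, from a delegation graph with $\lambda=2m$ safe proposals, the set of coding agents whose guru is an expert forms a vertex cover of size at most $\kappa$, because for each edge its proposal can be safe only if one of its two incident coding agents votes correctly on it, which (given the forbidden coding-to-coding delegations) happens only if that agent delegated to an expert.
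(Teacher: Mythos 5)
There is a genuine gap, and you have in fact pointed at it yourself: CLD has no budget parameter, so the bound $\kappa$ must be encoded structurally, and none of the mechanisms you sketch for doing so actually work in this model. A guru has unlimited capacity --- the willingness-to-delegate test is purely pairwise --- so an expert cannot ``absorb at most a bounded number of delegators,'' and there is no way to create only $\kappa$ ``delegation slots.'' Your alternative of making each expert compatible with exactly one coding agent either pre-selects which vertices may enter the cover (wrong) or, if you give every coding agent its own expert, removes the cap entirely. The paper's reduction resolves exactly this difficulty with a different device: separator proposals on which no one is uncertain restrict the delegation graph to a fixed set of single-arc delegations $v^u_{e,1}\rightarrow v^u_{e,2}$ and $v^u_1\rightarrow v^u_2$, and each such delegation \emph{shifts} uncertainty rightward rather than destroying it. Covering an edge at vertex $u$ forces a chain of delegations that terminates by increasing $|N^?(p_\$)|$ by one at a designated sink proposal, and $p_\$$ is calibrated to tolerate an increase of at most $\kappa$. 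That ``pay one unit at the sink per selected vertex'' accounting is the missing idea; without it (or an equivalent), the converse direction of your reduction fails, since a delegation graph could repair all $2m$ edge proposals using more than $\kappa$ delegating coding agents.

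There is also a concrete error in the construction as written: a ``vote-yes-on-everything'' expert \emph{does} observably disagree with every coding agent. A coding agent $i$ has $P^{-!}(i)=P\setminus P^+(i)\neq\varnothing$, so $P^{-!}(i)\cap P^{+!}(j)\neq\varnothing$ for an expert $j$ with $P^{+!}(j)=P$, violating condition (1) of willingness to delegate. You would need one expert per coding agent whose positive interval matches that agent's, which only worsens the counting problem above. Finally, note that the paper does not reuse the VC-2PC machinery here at all; it reduces directly from vertex cover on cubic graphs with bespoke edge, vertex, and sink gadgets, precisely because the delegation semantics (shifting rather than removing uncertainty, and the agreement/knowledge constraints on arcs) make the EAP-style construction unsuitable.
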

\begin{proof}
    We reduce from the NP-hard problem {\sc Vertex Cover on Cubic Graphs (VC-CG)}~\cite{GJS74}. Figure~\ref{fig:cldnpfig} (in the Appendix) illustrates an example of the reduction.
    
    An instance of this problem consists of an undirected cubic graph $G=(V, E)$ (i.e., the degree of each vertex $v\in V$ is exactly $3$) and an integer $\kappa$; it is a yes-instance if $G$ admits a vertex cover of size~$\kappa$ (i.e., a subset $V'\subseteq V$ that contains at least one endpoint of each edge), and a no-instance otherwise. 

    Consider an instance of VC-CG given by a graph $G = (V,E)$ and an integer $\kappa$. 
    
    Fix an arbitrary order $\lhd_E$ on $E$.
    For each edge $e \in E$ with endpoints $u$ and $v$, we introduce proposals 
    $s_e, r_{e,u}, p_{e}, r_{e,v}$ (jointly denoted as the \emph{edge gadget for $e$}). For each vertex $u$, we introduce proposals $s_u$, $p_u$, and $q_u$ (jointly denoted as the \emph{vertex gadget for $u$}). Additionally, we introduce two proposals, denoted by $s_\$, p_\$$ (the \emph{sink gadget}).

    The proposals are ordered so that 
    (1) all edge gadgets precede all vertex gadgets, and all vertex gadgets precede the sink gadget;
    (2) the edge gadgets appear in the order induced by $\lhd_E$;
    (3) the vertex gadgets appear in an arbitrary order;
    (4) within an edge gadget for $e=\{u, v\}$, the proposals are ordered
    as $s_e\prec r_{e,u}\prec p_{e}\prec r_{e,v}$;
    (5) within a vertex gadget for $u$, the proposals are ordered 
    as $s_u\prec p_u\prec q_u$;
    (6) within the sink gadget, the proposals are ordered 
    as $s_\$\prec p_\$$.
    
    For each vertex $u\in V$,
    we consider the edges $a$, $b$, $c$ incident on $u$ (assume $a\lhd_E b \lhd_E c$), and introduce eight agents, denoted $v^u_{a,1},v^u_{a,2}, v^u_{b,1},v^u_{b,2},v^u_{c,1},v^u_{c,2}, v^u_1, v^u_2$ (they are said to be \emph{related to $u$}) 
    with the following beliefs:
    \begin{itemize}
        \item the sets $P^+(v^u_{a,1})$ and $P^+(v^u_{a,2})$ consist of all proposals in the shortest interval containing $p_a, r_{a,u}$ and $r_{b,u}$, and $P^?(v^u_{a,1})= \{p_a, r_{a,u}\}$, whereas $P^?(v^u_{a,2}) = \{r_{b,u}\}$.
        \item the beliefs of $v^u_{b,1}$ and $v^u_{b,2}$ (resp., $v^u_{c,1}$ and $v^u_{c,2}$) are defined similarly, replacing $p_a$ by $p_b$ (resp., $p_c$), $r_{a,u}$ by $r_{b,u}$ (resp., $r_{c,u}$) and $r_{b,u}$ by $r_{c,u}$ (resp., $p_u$).
        \item the sets $P^+(v^u_1)$ and $P^+(v^u_2)$ consist of all proposals in the interval from $p_u$ to $p_\$$, and $P^?(v^u_1) =\{p_u, q_u\}$, $P^?(v^u_2) = \{p_\$\}$.
    \end{itemize}
    We then add polynomially many \textit{balancing agents} so that we have $2K+1$ agents in total 
    for some $K\in\mathbb N$ and:
    \begin{itemize}
        \item 
        $|N^+(p_e)|=K+2$, $|N^{+?}(p_e)|=2$ for each $e\in E$;
        \item 
        $|N^+(r_{e, u})|=K+2$, $|N^{+?}(r_{e, u})|=1$ or 
        $|N^+(r_{e, u})|=K+3$, $|N^{+?}(r_{e, u})|=2$
        for each $e\in E$, $u\in e$;
        \item 
        $|N^+(p_u)|=|N^+(q_u)| = K+3$, $|N^{+?}(p_u)|=2$, $|N^{+?}(q_u)|=1$;
        \item 
        $|N^+(s_e)|=|N^+(s_u)| = |N^+(s_\$)| = K+1$, 
        $N^?(s_e)=N^?(s_u) = N^?(s_\$) = \varnothing$,
        for each $e\in E$, $u\in V$; and 
        \item 
        $|N^+(p_\$)|= K + |V| + \kappa +1$, $|N^{+?}(p_\$)|=|V|$. 
    \end{itemize}
    To this end, we add agents that benefit from a single proposal only (and are certain about it), until the desired differences in support for each proposal are accomplished, and then add sufficiently many agents who benefit from all proposals, to get the target numbers.
    Let $N_V$ denote the set of all vertex-related agents and let $N_B$ be the set of balancing agents.
    
    We now consider the delegation graph. Note first that for each
    $u\in V$ agent $v^u_1$ may delegate to $v^u_2$, and for each 
    edge $e$ incident on $u$ agent $v^u_{e,1}$ may delegate to $v^u_{e,2}$. We claim that the delegation graph cannot contain
    any other arcs.
    
    To see this, observe first that 
    for each $i\in N_B$ we have $P^?(i)=\varnothing$, $|P^+(i)|=1$.
    Moreover, for each agent $j\in N_V$ we have $|P^{+!}(j)|\ge 2$.
    Hence, no agent in $N_B$ can delegate her vote to another agent,
    and no agent in $N_V$ can delegate her vote to an agent in $N_B$.
    
    Further, let $S=\{s_\xi: \xi\in V\cup E\cup\{\$\}\}$ be the set of {\em separators}, and note that $N^?(s) = \varnothing$ for all $s\in S$. Hence, for the arc $(i, j)$ to be in the delegation graph, $i$ and $j$ must agree on all separators.
    This can only happen if the positive intervals for $i$ and $j$ start in the same gadget as well as end in the same gadget, 
    which establishes our claim.
    
    It follows that each path in the delegation graph consists of at most one arc. Hence, any delegation scenario may only change the number of uncertain agents for each proposal, e.g., for each $p \in P$, no agent in $N^{+!}(p)$ can delegate (either directly or indirectly) to an agent in $N^{-!}(p)$ or vice versa.
     
     Hence, we can make all proposals safe if and only if the following conditions are satisfied: 
     \begin{itemize}
         \item For each $e\in E$, $|N^?(p_e)|$
         decreases by at least 1.
         \item For each $e\in E$ and $u\in e$, $|N^?(r_{e,u})|$
         does not increase.
         \item For each $u \in V$, $|N^?(p_u)|$ 
         does not increase.
         \item $|N^?(p_\$)|$ increases by at most $\kappa$.
     \end{itemize}
    
    We will now prove that there exists a delegation graph $\calD$ that guarantees correct decisions on all proposals if and only if $G$ has a vertex cover of size $\kappa$.
    
    For the `if' direction, let $X$ be a size-$\kappa$ vertex cover of $G$. 
    Then, for each $x \in X$, consider the following delegations: $v^x_{e,1} \rightarrow v^x_{e,2}$ for each $e$ incident on $x$,
    and $v^x_1 \rightarrow v^x_2$. As each edge $e$ is incident on some $x\in X$, these delegations reduce by $1$ the size of each set $N^?(p_e)$, $e\in E$, and increase by $\kappa$ the size of the set $N^?(p_\$)$. The only other proposals affected by these delegations are proposals of the form $r_{e, x}$ and $p_x, q_x$, where $x\in X$. But then it can be checked that the sizes of the sets $N^?(r_{e, x})$, $N^?(p_x)$ and $N^?(q_x)$ either decrease by one or do not change. Specifically, if $a\lhd_E b\lhd_E c$ are the three edges incident on $x$ then 
    (1) $v^x_{a,1} \rightarrow v^x_{a,2}$
    decreases $|N^?(r_{a, x})|$ by $1$ and
    increases $|N^?(r_{b, x})|$ by $1$;
    (2) $v^x_{b,1} \rightarrow v^x_{b,2}$
    decreases $|N^?(r_{b, x})|$ by $1$ and
    increases $|N^?(r_{c, x})|$ by $1$;
    (3) $v^x_{c,1} \rightarrow v^x_{c,2}$
    decreases $|N^?(r_{c, x})|$ by $1$ and
    increases $|N^?(p_x)|$ by $1$;
    (4) $v^x_{1} \rightarrow v^x_{2}$
    decreases $|N^?(p_x)|$ and $|N^?(q_x)|$ by $1$.
    Overall, the number of uncertain agents for each proposal falls within the desired bounds, and all proposals are safe.
    
    For the `only if' direction, 
    consider a delegation graph $\calD$ that guarantees correct decisions on all proposals. Fix a node $u\in V$, and let $a\lhd_E b\lhd_E c$ be the edges of $G$ incident on $u$. If $\calD$ contains the arc $v^u_{c, 1} \rightarrow v^u_{c, 2}$, then 
    it must also contain the arc
    $v^u_1\rightarrow v^u_2$ (or else $|N^?(p_u)|$ would increase by 1).
    Similarly, if $\calD$ contains the arc
    $v^u_{b, 1} \rightarrow v^u_{b, 2}$, it must also contain 
    $v^u_{c, 1} \rightarrow v^u_{c, 2}$ (or else $|N^?(r_{c, u})|$ would increase by 1) and hence
    $v^u_1\rightarrow v^u_2$.
    Finally, if $\calD$ contains the arc
    $v^u_{a, 1} \rightarrow v^u_{a, 2}$, it must also contain 
    $v^u_{b, 1} \rightarrow v^u_{b, 2}$ (or else $|N^?(r_{b, u})|$ would increase by 1) and hence
    $v^u_{c, 1} \rightarrow v^u_{c, 2}$ and, in turn, $v^u_1\rightarrow v^u_2$.
    That is, if $\calD$ contains a delegation between two agents related to $u$, it must contain the arc $v^u_1\rightarrow v^u_2$.
    
    Let $X$ be the set of vertices $x$ with $v^x_1 \rightarrow v^x_2$. Each such delegation increases $|N^?(p_\$)|$ by $1$, so $|X|\leq \kappa$.
    Now, consider an edge $e=\{u,w\}$. Since $|N^?(p_e)|$ must decrease by at least one, at least one of the delegations $v^u_{e,1} \rightarrow v^u_{e,2}$ or $v^w_{e,1} \rightarrow v^w_{e,2}$ must take place; assume without loss of generality that $\calD$ contains the arc $v^u_{e,1} \rightarrow v^u_{e,2}$. 
    As argued above, this means that $\calD$ also contains
    the arc $v^u_1 \rightarrow v^u_2$, i.e., $u\in X$. Hence, $X$ is a vertex cover for $G$ of size at most $\kappa$.
    \end{proof}

\section{Fixed-Parameter Tractability in $\boldsymbol{n}$ and $\boldsymbol{m}$}\label{sec:FPT}

In this section, we show that all three of our problems---EAP, DAP, and CLD---are fixed-parameter tractable (FPT) with respect to the number of voters ($n$), and independently, with respect to the number of proposals ($m$). These results hold even for the maximization version of these problems, where the goal is to maximize the number of proposals on which the correct outcome is obtained, and also do not require the assumption that the proposal space is one-dimensional. 

For the number of agents $n$, we can use brute-force algorithms, as there are $2^n$ possible subsets of agents to educate or delete, and at most $n^n$ different delegation graphs.

For the number of proposals $m$, our approach is based on integer linear programming; we show how to encode each of EAP, DAP, and CLD as an integer linear program (ILP) whose number of variables depends on $m$ (but not on $n$);
our claim then follows from Lenstra's classic result~\cite{Lenstra83}. 
To accomplish this, we classify the agents into `types', so that the number of types is exponential in $m$, but does not depend on $n$.

Our next result show that each of our problems are FPT in $m$ (the case for $n$ is trivial as detailed above). 


\begin{proposition}\label{prop:fpt-m}
{\sc EAP}, {\sc DAP}, and {\sc CLD} are {\sc FPT} for $n$, as well as {\sc FPT} for $m$.
\end{proposition}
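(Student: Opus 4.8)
The FPT-in-$n$ part is handled by the brute-force observation already made in the text: enumerate all $2^n$ subsets to educate or delete (for EAP/DAP), or all at most $n^n$ delegation graphs (for CLD), and for each candidate solution compute the safe zone in polynomial time, which is possible since checking safety of a single proposal is polynomial. So the substance is the FPT-in-$m$ claim, which I would prove via an ILP formulation of bounded dimension and then invoke Lenstra's theorem~\cite{Lenstra83}. The plan is to show that, although there are $n$ agents, they fall into at most $f(m)$ \emph{types}, where two agents have the same type iff they have the same tuple $(P^+(i), P^!(i))$ — equivalently the same quadruple of sets $(P^{+!}(i), P^{-!}(i), P^{+?}(i), P^{-?}(i))$, a partition of $P$ into four blocks, so there are at most $4^m$ types. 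The key point is that for computing correct/possible/safe status of each proposal, only the \emph{counts} $n_\tau$ of agents of each type $\tau$ matter, not their identities.

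For EAP: introduce integer variables $x_\tau$ (number of agents of type $\tau$ to educate), constrained by $0\le x_\tau\le n_\tau$ and $\sum_\tau x_\tau\le\kappa$. Educating an agent of type $\tau$ effectively moves it to the type $\tau'$ obtained by emptying $P^?$. Then for each proposal $p$ one can write a linear expression for $|N^+(p)\cup N^?(p)|$ and $|N^-(p)\cup N^?(p)|$ in the resulting instance in terms of the $x_\tau$ and the fixed $n_\tau$, and introduce a binary indicator $y_p$ that is forced to $0$ unless $p$ is safe (safety of $p$ is a conjunction of at most two linear inequalities comparing these sizes to $n/2$ — e.g.\ if $p$ is good, $p$ is safe iff $|N^-(p)\cup N^?(p)|<n/2$); maximize $\sum_p y_p$ and compare to $\lambda$. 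The number of variables is $O(4^m + m)$, independent of $n$, so Lenstra applies. DAP is analogous: $x_\tau$ now counts agents \emph{removed} from type $\tau$, the correct outcomes are read off the original instance (so they are constants), and the possible-outcome sizes in $I'$ are linear in the $x_\tau$; note the new total $n-\sum_\tau x_\tau$ also appears, but that is still linear.

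For CLD the encoding is a bit more delicate and this is the main obstacle. A delegation graph is specified, up to the information that affects outcomes, by how many agents of each type end up copying each guru type; equivalently by a nonnegative integer flow $f_{\tau\to\sigma}$ counting agents of type $\tau$ whose guru has type $\sigma$. I would use variables $z_{\tau\sigma}$ = number of agents of type $\tau$ with guru of type $\sigma$ (allowing $\sigma=\tau$ for non-delegators), subject to $\sum_\sigma z_{\tau\sigma}=n_\tau$ and $z_{\tau\sigma}=0$ unless a ``willingness path'' from $\tau$ to $\sigma$ exists; the willingness relation in condition~(1)–(2) depends only on types, so the set of admissible pairs $(\tau,\sigma)$ is fixed once $m$ is fixed. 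The subtlety is ensuring such a $z$ is \emph{realizable} by an actual consistent DAG (acyclic, outdegree $\le 1$): since willingness requires strictly increasing $|P^!(\cdot)|$, the type-level willingness relation is already a strict partial order, so acyclicity is automatic, and any $z$ with the above constraints and $z_{\tau\sigma}>0$ only when $\tau$ can reach $\sigma$ through willingness-arcs can be realized by routing each delegating agent along a fixed shortest willingness-path of types (an agent can always be made to delegate toward any reachable ``heavier'' type, possibly through intermediate agents, provided those intermediate types are nonempty — here one must check that $z_{\tau\sigma}>0$ may be restricted to pairs reachable using \emph{only} types $\rho$ with $n_\rho\ge1$, which is fine). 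I would argue this equivalence carefully, then write the resulting instance's outcome sizes as linear expressions in the $z_{\tau\sigma}$ (the number of agents contributing to $N^?(p)$ is $\sum_{\sigma:\, p\in P^?(\sigma)}\sum_\tau z_{\tau\sigma}$, etc.), add safety indicators $y_p$ exactly as before, and maximize $\sum_p y_p$. The number of variables is $O(16^m)$, independent of $n$, so Lenstra's algorithm again runs in time $g(m)\cdot\mathrm{poly}(n)$, completing the proof.
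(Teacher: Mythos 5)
Your plan coincides with the paper's for the FPT-in-$n$ part (brute force over $2^n$ subsets or $n^n$ delegation graphs, plus polynomial-time safety checking) and, in substance, for EAP and DAP: the paper likewise partitions agents into at most $4^m$ (resp.\ $3^m$) types, uses one count variable per type together with an ``education map'' sending each confused type to its unconfused image (resp.\ deletion counts $x_{\mathbf t}\le y_{\mathbf t}$), encodes safety of each proposal via a big-$M$ $0/1$ indicator, and invokes Lenstra. For those two problems you are essentially reproducing the paper's argument and the details you give are correct.

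For CLD your encoding is genuinely different: you aggregate the delegation graph into a type-to-guru flow $z_{\tau\sigma}$, whereas the paper enumerates delegation paths in the type-level willingness graph and uses one variable per path. Your version is cleaner in principle, but the realizability step as you state it fails. Requiring only that $z_{\tau\sigma}>0$ implies a willingness path from $\tau$ to $\sigma$ through nonempty types is necessary but not sufficient. Willingness is not transitive (see the paper's three-agent example, where $1$ may delegate to $2$ and $2$ to $3$ but $1$ not directly to $3$), so realizing $z_{\tau\sigma}>0$ may force the path to pass through an intermediate type $\rho$; the $\rho$-agent used on that path then has her single outgoing arc pointing toward the sink of the path, so her guru is also of type $\sigma$, which forces $z_{\rho\sigma}\ge 1$. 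Your ILP does not impose this coupling: a solution with, say, $z_{\tau\sigma}=5$ and $z_{\rho\rho}=n_\rho$ would be accepted but is unrealizable, and since majorities can hinge on a single agent this can change feasibility. You would need either to add consistency constraints tying the support of $z$ to a family of paths whose internal types are each charged one agent with the corresponding guru, or to switch to a path-based parameterization as the paper does. Either fix keeps the number of variables a function of $m$ alone, so the Lenstra step survives; but the equivalence you defer to ``argue carefully'' is precisely where the missing content lies.
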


\begin{proof}
    We only detail the proof for DAP---proofs for EAP and CLD can be found in the Appendix.
    
    First, note that we can remove all proposals $p$ with $|N^+(p)|=|N^-(p)|$
    from the description of our voting instance (as we are satisfied with either outcome on any such proposal).
    Thus, from now on we will assume that, for each $p\in P$, we have
    $|N^+(p)|>n/2$ or $|N^-(p)|>n/2$.
    
    Given an agent $i$, construct a string $\vect = (t_1, \dots, t_m)$ 
    over $\{+, -, ?\}$, where, for each $j\in [m]$, we set 
    $t_j=+$ if $p_j\in P^{+!}(i)$, 
    $t_j=-$ if $p_j\in P^{-!}(i)$, and
    $t_j=?$ if $p_j\in P^{?}(i)$. We will refer 
    to $\vect$ as the {\em type} of $i$.
    Let $T = \{+, -, ?\}^m$.
    By construction, there are at most $3^m$
    distinct agent types. 
    
    For each $\vect\in T$, 
    let $y_\vect$ denote the number of agents of type $\vect$ in $I$, and let $x_\vect$ denote the number of agents of type $\vect$ in the instance $I'$ obtained after some agents have been deleted. The constraint that we can remove
    at most $\kappa$ agents is encoded as
    \begin{equation}\label{eq:dap-count}
        0\le x_\vect\le y_\vect\text{\ \ for all \ \ }\vect\in T, \quad 
        \sum_{\vect\in T}x_\vect \ge n-\kappa. 
    \end{equation}
    
    For each proposal $p_j\in P$, let 
    $T^{+!}(j) = \{\vect\in T: t_j=+\}$, 
    $T^{-!}(j) = \{\vect\in T: t_j=-\}$, 
    $T^{?}(j) = \{\vect\in T: t_j=?\}$.
    Given a proposal $p_j$, let
    \begin{align}\label{eq:dap-ab}
    a_j &= 
    \sum_{\vect\in T^{+!}(j)}x_\vect - \sum_{\vect\in T^{-!}(j)}x_\vect -  \sum_{\vect\in T^{?}(j)}x_\vect; \\
    b_j &= 
    \sum_{\vect\in T^{-!}(j)}x_\vect - \sum_{\vect\in T^{+!}(j)}x_\vect -  \sum_{\vect\in T^{?}(j)}x_\vect.\nonumber
    \end{align}
    Note that $a_j, b_j\le n$.
    If $p_j$ is good, then we would like $a_j$ to be positive, and 
    if $p_j$ is bad, then we would like $b_j$ to be positive.
    Thus, for each $j\in [m]$ we introduce a variable $z_j$
    that takes values in $\{0, 1\}$, and add the constraint
    \begin{align*}
    n\cdot z_j \ge a_j, \qquad n\cdot(1-z_j)\ge 1-a_j\qquad\qquad(\text{$j$-good})
    \end{align*}
    if $p_j$ is good and the constraint
    \begin{align*}
    n\cdot z_j \ge b_j, \qquad n\cdot(1-z_j)\ge 1-b_j\qquad\qquad(\text{$j$-bad})
    \end{align*}
    if $p_j$ is bad.
    The reader can verify that condition ($j$-good)
    ensures that $z_j=1$ if and only if $a_j>0$, whereas
    condition ($j$-bad) ensures that $z_j=1$ if and only if $b_j>0$.
    
    To summarize, the set of variables of our ILP is
    $\{x_\vect: \vect\in T\}\cup\{a_j, b_j, z_j: j\in [m]\}$,
    and its objective function is
    $\max \sum_{j\in [m]} z_j$.
    The set of constraints consists of~\eqref{eq:dap-count}, \eqref{eq:dap-ab}, 
    and, for each $j\in [m]$, one of the constraints ($j$-good) or ($j$-bad), 
    depending on whether $p_j$ is good or bad, together with the constraint $0\le z_j\le 1$.
    Following Lenstra's result~\cite{Lenstra83},
    this ILP can be solved in time $\mathrm{poly}\brackets{\brackets{3^m}^{\calO\brackets{3^m}}, n}$.
\end{proof}

\section{The Radical One-Dimensional Domain}\label{sec:rad}

Here we consider a more restricted variant of the one-dimensional model in which, for each agent $i\in N$, the set $P^+(i)$ is a suffix of $\prec$, 
i.e., either $P^+(i)=\varnothing$ or $p_m\in P^+(i)$. Intuitively, 
this model reflects settings where the proposals can be naturally ordered from 
radical proposals to mild proposals, so that a typical agent disapproves the 
most radical proposal, but approves the mildest proposal, and switches
from disapproval to approval at a certain point in which the proposals become less extreme.
Consequently, in this model, referred to as the \emph{radical 1D domain}, $P^?(i)$ is an interval of $\prec$, for each $i\in N$. 
For convenience, in what follows we assume that the number of agents 
$n$ is odd and hence $\corr(p, I)$ is a singleton for each $p\in P$ (but our results hold also for even number of agents);
slightly abusing notation, we write $\corr(p, I)=z$ instead of $\corr(p, I)=\{z\}$.

First, we make the following two observations.

\begin{observation}\label{observation:1dplus}
  For the radical 1D domain, the list $(\corr(p_i, I))_{i\in [m]}$
  is of the form $(0, \dots, 0, 1, \dots, 1)$.
\end{observation}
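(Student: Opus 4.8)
The plan is to show that as we move from the most radical proposal $p_1$ to the mildest proposal $p_m$, the "approval count" $|N^+(p_i)|$ is monotonically non-decreasing, which immediately forces the list of correct outcomes to have the claimed shape: once a proposal is good (correct outcome $1$), every milder proposal is also good.

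First I would recall the defining property of the radical 1D domain: for each agent $i$, the set $P^+(i)$ is a suffix of $\prec$, i.e., $P^+(i)=\varnothing$ or $P^+(i)=\{p_j,\dots,p_m\}$ for some $j$. The key consequence I would extract is that the indicator of "$p\in P^+(i)$" is non-decreasing along $\prec$: if $p_i\in P^+(k)$ and $i\le i'$, then $p_{i'}\in P^+(k)$ as well. Summing this over all agents $k\in N$ yields $|N^+(p_1)|\le |N^+(p_2)|\le\dots\le |N^+(p_m)|$.

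Next I would translate this into a statement about correct outcomes. Since $n$ is odd, for each $p_i$ exactly one of $|N^+(p_i)|>n/2$ or $|N^-(p_i)|=n-|N^+(p_i)|>n/2$ holds, so $\corr(p_i,I)=1$ iff $|N^+(p_i)|>n/2$ and $\corr(p_i,I)=0$ otherwise. Now if $\corr(p_i,I)=1$, then $|N^+(p_i)|>n/2$, and by monotonicity $|N^+(p_{i'})|\ge |N^+(p_i)|>n/2$ for all $i'\ge i$, hence $\corr(p_{i'},I)=1$. Equivalently, the set of indices $i$ with $\corr(p_i,I)=1$ is upward closed, i.e., a suffix of $[m]$, so the list $(\corr(p_i,I))_{i\in[m]}$ has the form $(0,\dots,0,1,\dots,1)$ (with either block possibly empty). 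For even $n$ one argues the same way using "$p$ is good iff $|N^+(p)|\ge n/2$", noting that goodness of $p_i$ still propagates to all milder proposals.

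I do not expect any real obstacle here — the statement is essentially a one-line monotonicity argument. The only thing to be slightly careful about is the direction of the ordering (radical-to-mild, so that $P^+(i)$ being a \emph{suffix} is what gives a \emph{non-decreasing} approval count), and the even-$n$ edge case, which the paper itself flags as handled analogously.
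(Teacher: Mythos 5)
Your proof is correct and follows essentially the same route as the paper's: both arguments reduce to the observation that $P^+(i)$ being a suffix of $\prec$ gives $N^+(p_j)\subseteq N^+(p_k)$ for $j\le k$, so goodness propagates to all milder proposals. Your explicit handling of the even-$n$ case is a small bonus the paper leaves implicit, but it is not a different argument.
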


\begin{proof}
Suppose that $\corr(p_j, I)=1$ for some $j<m$. Then $|N^+(p_j)|>n/2$. Consider a proposal $p_k$ with $k>j$. For each agent $i\in N$ with $p_j\in P^+(i)$ we have $p_k\in P^+(i)$ and hence $N^+(p_j)\subseteq N^+(p_k)$. It follows that
$|N^+(p_k)|>n/2$, i.e., $\corr(p_k, I)=1$.
\end{proof}

\begin{observation}\label{observation:leftright}
  For the radical 1D domain, if a proposal $p_j$ is safe and $\corr(p_j, I)=0$,
  then each proposal $p_k$ with $k\le j$ is safe. Similarly, if 
  $p_j$ is safe and $\corr(p_j, I)=1$, then each proposal $p_k$ with $k\ge j$ is safe.
\end{observation}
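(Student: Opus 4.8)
The plan is to mirror the structure of Observation~\ref{observation:1dplus}: argue about the monotone containment of positive-support sets along $\prec$, and combine this with the fact that, in the radical 1D domain, each set $P^?(i)$ is an interval of $\prec$. I will only write out the case $\corr(p_j, I)=0$; the case $\corr(p_j, I)=1$ is symmetric (swap the roles of $+$ and $-$, of left and right, and of $0$ and $1$), so a single sentence invoking symmetry will suffice at the end.

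First I would fix $k \le j$ and show $\corr(p_k, I)=0$. Since $p_j$ is safe with $\corr(p_j,I)=0$, the only correct outcome for $p_j$ is $0$, so by Observation~\ref{observation:1dplus} every proposal to the left of (or equal to) $p_j$ also has correct outcome $0$; in particular $\corr(p_k, I)=0$, i.e.\ $|N^-(p_k)| \ge n/2$ (with strict inequality, using that $n$ is odd). Next I would show that $0$ is the only possible outcome for $p_k$, i.e.\ that $1 \notin \poss(p_k, I)$. By definition this amounts to showing $|N^+(p_k) \cup N^?(p_k)| < n/2$. The key step is to relate this set to $N^+(p_j) \cup N^?(p_j)$. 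Because $P^+(i)$ is a suffix of $\prec$ for every agent $i$, the relation $N^+(p_k) \subseteq N^+(p_j)$ holds (agent $i$ approving $p_k$ approves everything to its right). For the uncertain agents I would argue: if $i \in N^?(p_k)$ then $p_k \in P^?(i)$, and since $P^?(i)$ is an interval of $\prec$ adjacent to the right endpoint of $P^+(i)$, agent $i$ is either certain-positive or uncertain about every proposal to the right of $p_k$ up to the right end of her $P^+(i)$-interval; the case that needs care is when $p_j$ lies to the right of agent $i$'s entire positive interval, so that $i \in N^{-!}(p_j)$. I would rule this out: such an agent $i$ has $\corr$-profile consistent with $p_j \in P^-(i)$, but more to the point, I would count directly rather than trying to force an inclusion of the raw sets.

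Hence the cleaner route — and the one I expect to actually use — is to avoid a set inclusion and instead compare the \emph{counts} $|N^+(p_k) \cup N^?(p_k)|$ and $|N^-(p_k)|$ directly, exploiting that in the radical 1D domain each agent contributes to exactly one of $N^+(p_k)$, $N^{-!}(p_k)$, or $N^?(p_k)$, and that "possible outcome $1$ at $p_k$" is equivalent to "at least half the agents are in $P^{+}(i)$-or-uncertain about $p_k$". Suppose for contradiction that $1 \in \poss(p_k,I)$. Consider the agents witnessing this, $W = N^+(p_k)\cup N^?(p_k)$ with $|W|\ge n/2$. For each $i \in W$ I claim $p_j \in P^+(i) \cup P^?(i)$: indeed $p_k \in P^+(i)$ gives $p_j \in P^+(i)$ (suffix), while $p_k \in P^?(i)$ means $p_k$ lies in $i$'s uncertainty interval, whose right end is at or past the right end of $P^?_R(i)$, which touches the right endpoint of $P^+(i)$; but $P^+(i)$ being a suffix forces $P^?_R(i) = P^?(i)$ to sit \emph{immediately left of} the suffix, so every proposal strictly to the right of $p_k$ that is not in $P^+(i)$ lies in $P^?(i)$ — wait, this is exactly the subtle containment I flagged. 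So I would instead phrase the claim as: every $i \in W$ satisfies $p_j \notin P^{-!}(i)$, equivalently $i \in N^+(p_j) \cup N^?(p_j)$, which follows because if $p_j \in P^{-!}(i)$ then $p_j$ lies strictly right of $i$'s uncertainty interval and of $P^+(i)$, contradicting $p_k \le p_j$ with $p_k \in P^+(i)\cup P^?(i)$. Therefore $W \subseteq N^+(p_j)\cup N^?(p_j)$, giving $|N^+(p_j)\cup N^?(p_j)| \ge |W| \ge n/2$, so $1 \in \poss(p_j, I)$. But $\corr(p_j,I)=0$ and $p_j$ is safe, so $\poss(p_j,I)=\{0\}$, a contradiction. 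Hence $1 \notin \poss(p_k,I)$, so $\poss(p_k, I) = \{0\} = \corr(p_k, I)$ and $p_k$ is safe. The symmetric argument, reflecting the order and exchanging $+$ with $-$, handles the case $\corr(p_j, I)=1$.

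The main obstacle, as the discussion above shows, is precisely pinning down the containment $N^+(p_k) \cup N^?(p_k) \subseteq N^+(p_j) \cup N^?(p_j)$ when $k \le j$: one must use both structural facts of the radical 1D domain together — that $P^+(i)$ is a suffix \emph{and} that $P^?(i)$ is a single interval abutting its left endpoint — to guarantee that no agent who is positive-or-uncertain about the more radical proposal $p_k$ has become certainly-negative by the time we reach the milder proposal $p_j$. Once that inclusion is in hand, the rest is immediate from the safety of $p_j$.
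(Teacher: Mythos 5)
Your overall route is essentially the paper's, read through the complement: the paper observes that $p_j$ being safe with $\corr(p_j,I)=0$ means $|N^{-!}(p_j)|>n/2$, and that $N^{-!}(p_j)\subseteq N^{-!}(p_k)$ for $k\le j$; your containment $N^+(p_k)\cup N^?(p_k)\subseteq N^+(p_j)\cup N^?(p_j)$ is exactly the contrapositive of this, and your proof by contradiction is a valid repackaging. The detour through Observation~\ref{observation:1dplus} to establish $\corr(p_k,I)=0$ is harmless but unnecessary: $|N^{-!}(p_k)|>n/2$ already yields both $\corr(p_k,I)=\{0\}$ and $\poss(p_k,I)=\{0\}$ in one step.

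However, the justification you give for the key containment is directionally wrong, and as literally written it does not produce the contradiction you claim. You assert that $p_j\in P^{-!}(i)$ forces $p_j$ to lie \emph{strictly to the right} of $P^+(i)$ and of $P^?(i)$, and that this contradicts $p_k\preceq p_j$ together with $p_k\in P^+(i)\cup P^?(i)$. It does not: if $p_j$ sat to the right of that set, a proposal $p_k$ to its left could perfectly well still belong to it. In the radical domain the picture is the opposite one: $P^+(i)$ is a \emph{suffix} of $\prec$, so $P^-(i)$ is a prefix, and $P^?(i)$ is an interval meeting $\{p_{t-1},p_t\}$ where $p_t$ is the left endpoint of $P^+(i)$; hence $P^+(i)\cup P^?(i)$ is itself a suffix, and $p_j\in P^{-!}(i)$ places $p_j$ strictly to the \emph{left} of that suffix. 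Then $p_k\preceq p_j$ places $p_k$ strictly to the left as well, so $p_k\notin P^+(i)\cup P^?(i)$ --- that is the actual contradiction. The same left/right confusion surfaces earlier when you worry about ``$p_j$ lying to the right of agent $i$'s entire positive interval'': in the radical domain that case cannot arise, precisely because $P^+(i)$ contains $p_m$. With the direction corrected, your argument goes through and coincides with the paper's.
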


\begin{proof}
Fix a safe proposal $p_j$. Suppose $\corr(p_j, I)=0$;
the case $\corr(p_j, I)=1$ is symmetric. As $p_j$ is safe, we have 
$|N^{-!}(p_j)|>n/2$. Consider a proposal $p_k$ with $k<j$.
Note that $p_j\in P^{-!}(i)$ implies $p_k\in P^{-!}(i)$.
Hence $N^{-!}(p_j)\subseteq N^{-!}(p_k)$ and
$|N^{-!}(p_k)|\ge |N^{-!}(p_j)|>n/2$, so $p_k$ is safe.
\end{proof}

We will now show that for the radical 1D domain the problems 
EAP, DAP and CLD admit polynomial-time algorithms.

\begin{proposition}\label{prop:poly-MAX-1D+}
  In the radical 1D domain, {\em EAP}, {\em DAP}, and {\em CLD} are all solvable in polynomial time.
\end{proposition}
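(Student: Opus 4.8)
The plan is to exploit the prefix/suffix structure of safe zones in the radical 1D domain, which is preserved by all three operations, and thereby reduce each of EAP, DAP and CLD to polynomially many elementary feasibility checks.

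\textbf{A common reduction.} First I would observe that each operation produces a modified instance $I'$ that again lies in the radical 1D domain: educating $i$ sets $P^?(i)$ to $\varnothing$; deleting agents leaves the rest untouched; delegation replaces $\pi(i)$ by the (still radical 1D) belief $\pi(s(i))$. Describe each agent's ``known'' profile by an interval $[\ell_i,r_i]$, so that $p_j\in P^{-!}(i)$ iff $j<\ell_i$ and $p_j\in P^{+!}(i)$ iff $j>r_i$. The monotonicity argument behind Observation~\ref{observation:leftright} then applies verbatim inside $I'$: in $I'$, the proposals that are forced-rejected (a strict majority is certain they are bad) form a prefix $p_1,\dots,p_{a'}$, and those that are forced-approved form a suffix $p_{b'},\dots,p_m$. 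Combining this with Observation~\ref{observation:1dplus} (the list $(\corr(p_j,I))_{j}$ has the form $0\cdots0\,1\cdots1$, with the switch after some $p_t$), the set of proposals that an EAP/DAP/CLD solution ``gets right'' (i.e., $\poss(p,I')\subseteq\corr(p,I)$) is exactly $\{p_1,\dots,p_a\}\cup\{p_b,\dots,p_m\}$ for $a=\min(a',t)$ and $b=\max(b',t+1)$, so $0\le a\le t<b\le m+1$. Hence, for each of the three problems, the optimum equals $\max\{a+(m-b+1):(a,b)\text{ achievable}\}$, where a pair $(a,b)$ is \emph{achievable} if the permitted operations can be applied so that, in the resulting $I'$, a strict majority is certain $p_a$ is bad and a strict majority is certain $p_b$ is good (a vacuous requirement when $a=0$ or $b=m+1$). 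There are only $O(m^2)$ such pairs, so it suffices to decide achievability of a single pair in polynomial time.

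\textbf{EAP and DAP.} Educating an agent only enlarges every forced-reject set and every forced-approve set (it replaces $[\ell_i,r_i]$ by $\varnothing$), so education is monotone and there is no left/right conflict beyond the shared budget: deciding achievability of $(a,b)$ reduces to covering a deficit $d_0$ of additional forced-rejectors for $p_a$ and a deficit $d_1$ for $p_b$, where educating $i$ helps $p_a$ iff $p_a\in P^{-?}(i)$, helps $p_b$ iff $p_b\in P^{+?}(i)$, and helps both iff both hold; minimising the number of educations that cover $d_0$ and $d_1$ using these three pools of contributors has a closed-form integral optimum, and $(a,b)$ is achievable iff this optimum is at most $\kappa$. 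For DAP the operation is not monotone and deleting changes the majority threshold, so I would additionally loop over the number $k\in\{0,\dots,\kappa\}$ of deleted agents; for fixed $(a,b,k)$ the instance $I'$ has $n-k$ voters, deleting $i$ harms the $p_a$-requirement iff $i$ currently forced-rejects $p_a$ and the $p_b$-requirement iff $i$ currently forced-approves $p_b$, and an optimal strategy deletes agents outside both ``guard'' sets first and, when forced onto guards, balances the damage between the two requirements---again a small optimisation with a closed-form answer. So both EAP and DAP are solved in time polynomial in $m$ and $n$.

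\textbf{CLD: the main obstacle.} Here the operation changes both the uncertainty and the positive interval of a delegating agent, so the two requirements genuinely trade off. For fixed $(a,b)$, achievability asks for a consistent delegation graph in which at least $\lceil(n+1)/2\rceil$ agents have a guru $g$ with $\ell_g>a$ and at least $\lceil(n+1)/2\rceil$ have a guru with $r_g<b$. In the radical 1D domain the willingness relation simplifies: $i$ may delegate to $j$ iff $|P^?(i)|>|P^?(j)|$ and the intervals $[\ell_i,r_i]$, $[\ell_j,r_j]$ overlap or are adjacent; in particular uncertainty strictly decreases along every delegation path, so this relation is a DAG and a consistent delegation graph is precisely a choice of at most one out-arc per agent along the DAG---equivalently, an in-forest whose roots are the gurus. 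I would first compute, by a linear-time propagation over this DAG, for each agent and each of the four guru-labels in $\{0,1\}^2$ (``$\ell_g>a$?'', ``$r_g<b$?'') whether the agent can be routed to a guru carrying that label. The hard part---the crux of the whole proposition---is then to show that the interval structure collapses the exponentially many forest shapes to a polynomial family: I would argue that an optimal delegation forest may be taken to route all agents towards a bounded number of extremal ``canonical'' gurus (one per label), so that the two simultaneous count constraints become a bounded-dimensional feasibility problem solvable by direct case analysis, or, failing a clean structural lemma, that the per-pair check can be phrased as a small max-flow with integral optima. Granting this, looping over all $O(m^2)$ pairs yields a polynomial-time algorithm for CLD. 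I expect establishing this ``canonical forest'' structure (equivalently, the exact flow formulation) to be where the real work lies.
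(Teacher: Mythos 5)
Your overall decomposition is the same as the paper's: by Observations~\ref{observation:1dplus} and~\ref{observation:leftright} the achievable ``correct'' proposals always form a prefix together with a suffix, so one loops over the $O(m^2)$ endpoint pairs and decides achievability of each pair. Your EAP and DAP arguments are sound; the paper gets the same effect with less work by simply running the FPT-in-$m$ ILP of Proposition~\ref{prop:fpt-m} on the two-proposal instance $\{p_j,p_k\}$, where it is polynomial because the number of agent types is constant.

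The genuine gap is in CLD, and you name it yourself: the entire difficulty of the proposition is the per-pair feasibility check for delegations, and your proof does not establish it. You offer two candidate routes (a ``canonical guru'' structural lemma, or an integral max-flow formulation) but carry out neither, so as written the CLD case is a conjecture, not a proof. The paper's resolution is more elementary than either route and does not need a canonical-forest lemma. Partition $N$ by $I$-correctness on the two endpoint proposals into $N_A$ (correct on both), $N_B$ (only on $p_j$), $N_C$ (only on $p_k$), and $N_D$ (neither). A $(j,k)$-good delegation graph forces $N_A\neq\varnothing$ (two strict majorities must intersect, and gurus' beliefs are copied verbatim). In the radical 1D domain every agent in $N_D$ is uncertain about both $p_j$ and $p_k$ and hence willing to delegate to any agent in $N_A$; moreover, delegating to $N_A$ never hurts either count, so one may greedily and exhaustively absorb into $N_A$ every agent with a delegation path into it, after which $N_D=\varnothing$. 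Taking $|N_C|\ge|N_B|$ without loss of generality, the $p_k$-majority is then automatic (since $|N_A|+|N_B|+|N_A|+|N_C|>n$), and the only remaining question is whether $t=\lceil n/2\rceil-|N_A|-|N_B|$ agents of $N_C$ can reach $N_B$ in the willingness digraph --- a plain reachability count. So the ``real work'' you defer is done by a greedy absorption argument plus one reachability computation, with no flow integrality or extremal-guru structure required; without supplying some such argument, your CLD case does not go through.
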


\begin{proof}
By Observation~\ref{observation:leftright}, unsafe proposals form an interval of the form $\{p_{j+1}, \dots, p_{k-1}\}$ for some $p_j, p_k\in P$ with $\corr(p_j, I)=0$, $\corr(p_k, I)=1$.
We consider all $O(m^2)$ possible choices for $j$ and $k$ such that $k+1-j\le \lambda$, and aim to modify the instance so 
as to guarantee correct outcomes on $p_j$ and $p_k$;
this ensures that we obtain
correct outcomes on all proposals, except possibly for the proposals 
in the set $\{p_{j+1}, \dots, p_{k-1}\}$. 

For EAP and DAP, we can decide if a suitable modification exists 
by running our FPT algorithm (Propositions~\ref{prop:fpt-m}) on the set of proposals $\{p_j, p_k\}$;
note that this algorithm runs in polynomial time when there are just two proposals.

Unfortunately, this strategy fails for CLD, 
because proposals other than $p_j$ and $p_k$ 
influence which edges can appear in a delegation graph. 
Hence, we use a different approach. 
We say that an agent $i$ is {\em $I$-correct
on proposal $p_\ell$} 
if $\corr(p_\ell, I) = 0$ and $p_\ell\in P^{-!}(i)$, 
or $\corr(p_\ell, I) = 1$ and $p_\ell\in P^{+!}(i)$.
Also, we say that a delegation graph $\calD$ is {\em $(j, k)$-good} if in the instance $I'(\calD)$ more than $n/2$ agents are $I$-correct on $p_j$
and more than $n/2$ agents are $I$-correct on $p_k$.

To check if there is a $(j, k)$-good delegation graph, 
we partition the set $N$ as 
$N = N_A\cup N_B\cup N_C\cup N_D$ so that
\begin{itemize}
    \item agents in $N_A$ are $I$-correct on both $p_j$ and $p_k$;
    \item agents in $N_B$ are $I$-correct on $p_j$, but not on on~$p_k$;
    \item agents in $N_C$ are $I$-correct on $p_k$, but not on~$p_j$; and
    \item agents in $N_D$ are $I$-correct on both $p_j$ and $p_k$.
\end{itemize}
Suppose there exists a $(j, k)$-good delegation graph $\calD$.
Then, in $I'(\calD)$,
more than $n/2$ agents are $I$-correct on $p_j$ and
more than $n/2$ agents are $I$-correct on $p_k$. 
As these two sets of agents must overlap, there is an agent in $I'(\calD)$
that is $I$-correct on both $p_j$ and $p_k$.
Since delegation amounts to agents
copying the preferences of their gurus, this means that there is some such agent in $I$, i.e., $N_A$ must be non-empty.

Thus, if $N_A=\varnothing$, 
there is no $(j, k)$-good delegation graph.
Assume, then, that $N_A$ is non-empty.
Note that all agents in $N_D$ must be uncertain about $p_j$ and $p_k$
(and hence about all proposals in between) and are therefore 
willing to delegate to agents in $N_A$, who are certain about both.
Further, if there is a $(j, k)$-good delegation graph, then there is 
one in which all agents who can delegate to an agent in $N_A$ do so
(and thereby copy the preferences of their guru).
Hence, as our next step, we repeatedly check whether $N_B\cup N_C\cup N_D$
contains an agent who is willing to delegate to an agent in $N_A$;
if so, then we move this agent to $N_A$. This process stops when no such agent
remains. Observe that at this point $N_D=\varnothing$.

Now, if $|N_B|+|N_A|>n/2$,
then there is a majority of agents who are $I$-correct on $p_j$, and 
if $|N_C|+|N_A|>n/2$, then there is a majority of agents who are $I$-correct on $p_k$; so, if both of these conditions are satisfied, then we are done.
Assume without loss of generality that $|N_C|\ge |N_B|$. 
As we have $|N_B|+|N_A|+|N_C|+|N_A|=n+|N_A|>n$, the condition
$|N_C|+|N_A|>n/2$ is satisfied in this case. 

Suppose, however, that $|N_B|+|N_A|\le n/2$.
Then, the only way to obtain the correct outcome
on both $p_j$ and $p_k$ is to find 
$t = \lceil\frac{n}{2}\rceil-|N_A|-|N_B|$ agents in $N_C$
who can delegate to agents in $N_B$; indeed, 
in this case, after the delegation we will have $\lceil\frac{n}{2}\rceil$
agents who are $I$-correct on $p_j$ and 
$|N_C|-t+|N_A| = n+|n_A|-\lceil\frac{n}{2}\rceil \ge \lceil\frac{n}{2}\rceil$
agents who are $I$-correct on $p_k$ (where the last inequality
holds because $|N_A|\ge 1$). 

To determine whether these $t$ delegating agents could be found, 
we construct a graph $G$ on $N_B\cup N_C$
with an edge from agent $i$ to agent $i'$
if and only if $i$ is willing to delegate to $i'$, 
and mark each agent in $N_C$ that has a path to an agent in $N_B$;
the answer is positive if and only if at least $t$ agents are marked.
This completes the proof.
\end{proof}


\section{Outlook}

We have considered three algorithmic approaches to coping with confused communities that wish to reach good joint decisions, and showed that, even though our problems are generally NP-complete, there are special cases for which efficient algorithms exist.

Avenues for future research include the following:
(1) a study of other proposal spaces, including, in particular, continuous proposal spaces;
(2) extending our analysis to settings where proposals may have different importance and/or issues are non-binary;
(3) an exploration of different approaches to coping with voter uncertainty, including models in which the effect of education efforts is probabilistic and may propagate through an underlying social network;
(4) the development of computer-based simulations for the practical evaluation of the effectiveness of our algorithmic approaches.

\bibliographystyle{named}
\bibliography{abb,bib}

\clearpage

\appendix

\begin{center}
    {\LARGE Appendix}
\end{center}

\section{Proofs Missing from Section~\ref{sec:NP}}


We show that EAP, DAP, and CLD are NP-hard even in the one-dimensional case.

\subsection{Proof of Theorem~\ref{thm:1D-EAP_is_hard}}

We show that EAP is NP-hard even in the one-dimensional case.

To this end, we provide a reduction from {\sc Vertex Cover on Cubic Graphs (VC-CG)}. An instance of this problem comprises of  an undirected cubic graph $G=(V, E)$ (i.e., the degree of each vertex $v\in V$ is exactly $3$) and an integer $\kappa$; it is a yes-instance if $G$ admits a vertex cover of size~$\kappa$ (i.e., a subset $V'\subseteq V$ that contains at least one endpoint of each edge), and a no-instance otherwise. This problem is known to be NP-complete~\cite{GJS74}. We will now show that an even more restricted variant of vertex cover is NP-hard, and then describe a reduction from it to EAP.

We say that a graph $G=(V, E)$ is \emph{$2$-path-colored} if
we can partition $E$ as $E=E_1\cup E_2$ so that
\begin{enumerate}
    \item For each $h\in \{1,2\}$, each connected component of $(V,E_h)$ is an edge or a two-edge path; 
    \item Each $v\in V$ is incident to at least one edge in each of $E_1$ and $E_2$, and to at most three edges in total;
    \item $|E_1|=|E_2|$.
\end{enumerate}
In the problem {\sc Vertex Cover on $2$-path-colored graphs (VC-2PC)} we are given a $2$-path-colored graph $G$ together with the partition $E=E_1\cup E_2$ and a parameter $k$; the goal is to decide if $G$ admits a vertex cover of size~$k$.

\begin{proposition}
  {\sc VC-2PC} is {\em NP}-hard.
\end{proposition}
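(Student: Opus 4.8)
The plan is to reduce from {\sc Vertex Cover on Cubic Graphs (VC-CG)}, which is known to be NP-hard~\cite{GJS74}, and show that from any cubic graph $G=(V,E)$ with parameter $\kappa$ we can construct, in polynomial time, a $2$-path-colored graph $G'$ (together with a witnessing partition $E'=E_1\cup E_2$) and a parameter $k$ such that $G$ has a vertex cover of size $\kappa$ if and only if $G'$ has a vertex cover of size $k$. The three structural constraints defining a $2$-path-colored graph pull in slightly different directions, so the construction has to balance them: condition~(1) forces each color class to be a disjoint union of single edges and two-edge paths (so each color class is a matching-like low-degree object), condition~(2) forces every vertex to touch both colors and to have total degree at most three, and condition~(3) is a global cardinality balancing condition $|E_1|=|E_2|$.

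The key observation making this feasible is that a cubic graph already has every vertex of degree exactly $3$, which is the maximum allowed by condition~(2); so the real work is to $2$-color the edges so that each color class decomposes into edges and two-edge paths while every vertex sees both colors. First I would invoke the fact that a cubic graph is $3$-edge-colorable in the ``fractional'' sense or, more directly, that its edge set decomposes appropriately: a standard approach is to take a proper edge coloring with a bounded number of colors (by Vizing, at most $4$), or to use a near-perfect matching. Concretely, I would find a maximal matching $M$ in $G$; the edges not in $M$ form a graph of maximum degree $2$ (since each vertex is in at most one matching edge and has degree $3$), hence a disjoint union of paths and cycles, which can be further chopped into pieces of length $1$ and $2$. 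Assigning $M$-edges and some of the leftover short paths to $E_1$ and the rest to $E_2$ gives color classes each of which is a union of edges and two-edge paths, satisfying condition~(1); ensuring every vertex is incident to both $E_1$ and $E_2$ (condition~(2)) requires being a little careful at vertices all of whose incident edges landed in one class, and I would handle those by local recoloring (or, if necessary, by attaching a small constant-size gadget pendant structure that contributes equally to both colors and whose optimal cover behavior is forced, so as not to disturb the vertex-cover value). Finally, to satisfy condition~(3), if $|E_1|\neq|E_2|$ after the coloring, I would append disjoint ``balancing'' gadgets—each a tiny fixed graph with a known minimum vertex cover, contributing one edge to the deficient color—and add the corresponding constant to $k$; since these gadgets are vertex-disjoint from the rest, their effect on the vertex cover number is exactly additive and known in advance.

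The main obstacle I expect is condition~(2)'s requirement that \emph{every} vertex be incident to at least one edge of each color, combined with condition~(1)'s restriction that color classes be unions of edges and two-edge paths: a naive decomposition can easily leave a vertex monochromatic, and fixing it by swapping an edge's color may violate the path-length constraint elsewhere or create a longer path or cycle. The cleanest way around this is probably not to try to $2$-color $G$ directly but to \emph{subdivide} or locally augment $G$ first—replacing each original edge (or each vertex) by a small constant-size gadget that has a built-in both-colors, short-paths structure and whose minimum vertex cover is a fixed function of whether the corresponding original edge is ``covered''—so that all three conditions hold by construction and the reduction's correctness reduces to a routine case analysis of the gadget. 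I would then verify: (i) the construction is polynomial-time; (ii) a size-$\kappa$ cover of $G$ extends to a size-$k$ cover of $G'$ by adding the forced gadget vertices; and (iii) conversely, any size-$k$ cover of $G'$ restricts (after paying the mandatory gadget cost, which a minimum cover cannot avoid) to a size-$\kappa$ cover of $G$. This establishes NP-hardness of {\sc VC-2PC}.
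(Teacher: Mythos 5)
Your proposal correctly identifies the source problem (\textsc{VC-CG}) and, in its second half, gravitates toward the right strategy --- subdivide each edge into a short path with a prescribed color pattern so that the three conditions hold by construction and the vertex-cover value shifts by a known additive amount. This is indeed what the paper does. However, what you have written is a plan rather than a proof, and the one idea that makes the plan work is exactly the piece you leave unspecified. The paper first takes a proper $4$-edge-coloring of the cubic graph $G$ with colors $A,B,C,D$ (Vizing's theorem); since the three edges at any vertex get three distinct colors, every vertex is incident to at least one edge colored in $\{A,B\}$ and at least one colored in $\{C,D\}$, and to at most two of each. Each edge is then replaced by a $3$-edge path colored red--blue--red if the edge was colored $A$ or $B$, and blue--red--blue if it was colored $C$ or $D$. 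This single device simultaneously guarantees condition (2) (each original vertex sees both a red and a blue stub, and internal path vertices see one of each) and condition (1) (each original vertex has red-degree and blue-degree at most $2$, and the stubs end at internal vertices of monochromatic degree $1$, so every color component is an edge or a $2$-edge path). Condition (3) is then obtained for free by taking a second, mirror-image copy of the construction with the two patterns swapped, giving $|E_1|=|E_2|=3|E|$ exactly. Finally, subdividing an edge into a $3$-edge path raises the vertex-cover number by exactly one per edge, so $k=2(\kappa+|E|)$. None of this case analysis is ``routine'' in the sense of being omittable: choosing the wrong pattern at a vertex immediately creates a monochromatic $3$-edge path and breaks condition (1).

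Two of the concrete fixes you propose for your first (direct $2$-coloring) approach do not work as stated. Attaching a pendant gadget to a vertex all of whose edges landed in one color class is impossible: that vertex already has degree $3$ in the cubic graph, and condition (2) caps the total degree at $3$, so nothing can be attached to it. And a disjoint ``balancing'' component that contributes one edge to the deficient color and nothing to the other cannot satisfy condition (2) either, since the endpoints of that edge would be incident to only one color; any legal balancing gadget must give every one of its vertices both colors while still having an odd red/blue imbalance and color components of size at most $2$, and you have not exhibited one. The paper avoids both problems entirely by routing everything through the Vizing coloring and the mirror copy. So the verdict is: right target, right general shape, but the reduction's essential mechanism --- how to certify conditions (1) and (2) simultaneously --- is missing.
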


\begin{proof}
We reduce from {\sc VC-CG}. Given a cubic graph $G=(V,E)$, we first color the edges of $G$ with four colors A, B, C, D so that no pair of edges of the same color have a common vertex; this is possible by Vizing's theorem~\cite{diestel}.

We now build a graph $G^*$ together with a red-blue edge coloring for it: we then place
all red edges into $E_1$ and all blue edges into $E_2$.
We start by creating two
disjoint copies of $G$, which we denote by $G'$ and $G''$.
We then replace each edge in these graphs with a $3$-edge path as follows.
If an edge $u$--$v$ in $G$ is colored A or B, then we replace its copy in $G'$ (respectively, $G''$) with a red--blue--red
path (respectively, with a blue--red--blue path). 
If an edge $u$--$v$ in $G$ is colored C or D, then we replace its copy in $G'$ (respectively, $G''$) with a blue--red--blue
path (respectively, with a red--blue--red path).

The graph $G^*=(V^*, E^*)$ has $2|V|+4|E|$ vertices and $6|E|$ edges. Each edge of $G$
corresponds to three red edges and three blue edges of $G^*$, so $|E_1|=|E_2|=3|E|$.
Moreover, since $G$ is cubic, the degree of each vertex in $G'$ and $G''$ is at most $3$. Also, since each vertex of $G$ is incident to at least one edge colored by A or B and to at least one edge colored by C or D, in $G^*$
each vertex is incident to at least one blue edge and at least one red edge.
By construction, $(V^*, E_1)$ is a collection of edges and $2$-edge paths, 
and so is $(V^*, E_2)$.

We will now argue that $G$ has a vertex cover of size $\kappa$ if and only if $G^*$ has a vertex cover of size $2(\kappa+|E|)$. It suffices to prove that $G$ has a
vertex cover of size $\kappa$ if and only if $G'=(V', E')$ has a vertex cover of size $\kappa+|E|$. 

Indeed, if $K\subseteq V$ is a vertex cover for $G$, then we can transform $K$ into a vertex cover
for $G'$ by adding one vertex per edge of $G$. Specifically, consider an edge $u$--$v$
of $G$, and suppose that it was replaced with a path $\rho_{uv}=u$--$x$--$y$--$v$. Since $K$ is a vertex cover for $G$, we have $\{u, v\}\cap K\neq\varnothing$. If $u\in K$, then, by adding $y$ to the cover, we ensure that all edges of $\rho_{uv}$ are covered. Similarly, if $v\in K$, then, by adding $x$ to the cover, we ensure that all edges of $\rho_{uv}$ are covered. Applying this transformation to each edge of $G$ results in a vertex cover for $G'$; the size of this cover is $|K|+|E|$. 

Conversely, suppose that $K'$ is a vertex cover for $G'$. Consider an edge $u$--$v$
of $G$ and the corresponding path $\rho_{uv}=u$--$x$--$y$--$v$ in $G'$. Since the edge
$x$--$y$ must be covered, we have $\{x, y\}\cap K'\neq\varnothing$. Furthermore, if
$\{u, v\}\cap K' = \varnothing$, then it has to be the case that both $x$ and $y$ are in 
$K'$ (as both edges $u$--$x$ and $y$--$v$ need to be covered). In this case, we can modify
$K'$ by removing $x$ and adding $u$; note that after this step $K'$ remains a vertex cover for $G'$. After applying this transformation to all edges, we ensure that $K'\cap V$
contains at least one endpoint of each edge of $G$ and hence forms a vertex cover of 
$G$. On the other hand, we have $|K'\setminus V|\ge |E|$, as even after the transformation $K'$ contains at least
one internal node of each $u$--$x$--$y$--$v$ path. Hence, $K=K'\cap V$ is a vertex 
cover for $G$ and $|K|\le |K'|-|E|$.
This completes the proof.
\end{proof}

\textbf{We are ready to show that EAP is NP-complete, even in the one-dimensional proposal space.}

\begin{proof}\emph{(of Theorem~\ref{thm:1D-EAP_is_hard})}
To prove NP-hardness, we describe a reduction from {\sc VC-2PC}. 
Given a $2$-path-colored graph $G=(V, E)$
with $|V| = n$, $|E| = 2m$, partition $E = E_1 \cup E_2$, and a parameter $\kappa$,
we construct an instance of EAP containing $2m$ proposals, 
so that each proposal is approved by a narrow majority.

For $h=1, 2$, we denote the edges of $E_h$ by $e_{h, 1}, \dots, e_{h, m}$.
We number them so that, if $e_{h,j}$ and $e_{h,j'}$ share a vertex, then $|j-j'|=1$. We create one proposal $p_{h,j}$ for each edge $e_{h,j}\in E_1\cup E_2$. The proposals are ordered 
as 
$$
p_{1,1}\prec \ldots  \prec p_{1,m} \prec 
p_{2,1}\prec \ldots  \prec p_{2,m}.
$$
Let $P$ denote the set of all proposals.

We introduce two sets of agents: the \emph{coding set} contains $n$ agents encoding the graph (one for each vertex of $V$) and the \emph{balancing set} contains $O(nm)$ agents that help set the majority correctly on each proposal. 

Specifically, to build the coding set, 
we create an agent $i$ for each vertex $v_i\in V$. The set $P^?(i)$
consists of proposals corresponding to the edges incident to $v_i$.
Note that $P^?(i)$ forms at most two intervals of the order $\prec$. Indeed,  this is immediate if $v_i$ has degree two. If $v_i$ has degree three, then it is the center of a $2$-edge path in some $E_h$, and the edges of this path are numbered as $e_{h, j}, e_{h_{j+1}}$ for some $j\in [m-1]$. The set $P^+(i)$ consists of all proposals 
in between the leftmost proposal in $P^?(i)$
and the rightmost proposal in $P^?(i)$ (inclusive).

Let $Z=\max_{p\in P}|N^+(p)|$. 
We build the balancing set in two steps.
First, for each $p\in P$ we create
$Z-|N^+(p)|$ agents $i$ with $P^+(i)= \{p\}$, $P^?(i)=\varnothing$.
After this step, for each $p\in P$ we have $|N^+(p)|=Z$, and 
exactly $2$ agents in $N^+(p)$ are uncertain about $p$. 
If there are $X$ agents in total at that point, 
then for each $p\in P$
we have $|N^-(p)|=X-Z$; let $Y=X-Z$. At the second step, 
if $Z< Y+3$, then we add $Y+3-Z$ agents $i$ with $P^{+!}(i)=P$, and
if $Z>Y+3$, then we add $Z-3-Y$ agents $i$ with $P^{-!}(i)=P$.
After this step, for each proposal $p\in P$, we have 
$|N^+(p)| = |N^-(p)| +3$;
two agents in $N^+(p)$ are uncertain about $p$.

Thus, even though each proposal $p\in P$ is good, 
it may receive $|N^-(p)|+2$ approvals 
and $|N^+(p)|-2<|N^-(p)|+2$ disapprovals, 
so $0$ is a possible outcome. However, if one of the agents who are uncertain about $p$ is educated, then at least $|N^+(p)|-1$ agents approve $p$ and at most $|N^-(p)|+1<|N^+(p)|-1$ agents disapprove it, so $1$ becomes the only possible outcome. That is, to make safe a proposal $p$ that corresponds to an edge $e$ we need to educate one of the two agents that correspond to the endpoints of $e$. Hence, we can make all proposals safe by educating $\kappa$ agents if and only if $G$ admits a vertex cover of size $\kappa$.
\end{proof}

\subsection{Proof of Theorem~\ref{thm:1D-DAP_is_hard}}


Our hardness proof for DAP (i.e., our proof for Theorem~\ref{thm:1D-DAP_is_hard}) is a modification of the reduction presented in Theorem~\ref{thm:1D-EAP_is_hard}.

\newcommand{\M}{\textup{M}}
\begin{proof}\emph{(of Theorem~\ref{thm:1D-DAP_is_hard})}
Compared to EAP, the additional difficulty in DAP results from the fact that deleting an agent affects not just the proposals on which she is uncertain, but also all other proposals, including those on which she has the majority opinion (so deleting an agent may have also negative effects, in contrast to educating an agent, which does not). To handle this, we introduce agents in groups of four agents each, so that in each group two agents benefit from the proposal and two do not. Intuitively, such groups must either be deleted entirely or retained fully, having a null net effect on all proposals, except those where one of the agents is uncertain. To enforce the null net effect, we create two copies of each proposal:
  for one copy the correct decision is $1$ while for the other it is $0$.
Then, if an agent who benefits from both proposals is deleted, then this must be balanced out by deleting an agent who does not benefit from either proposal.
Details follow.

Consider an instance of {\sc VC-2PC} given by a graph $G=(V,E)$ with $|V|=n$, $|E|=2m$, partition $E=E_1\cup E_2$, and an integer $\kappa$. 
As in the proof of Theorem~\ref{thm:1D-EAP_is_hard}, we number the edges of $E_1$ and $E_2$ 
as $e_{1, 1}, \dots, e_{1, m}$ and $e_{2, 1}, \dots, e_{2, m}$ 
so that the edges of each two-edge path in $E_1$ or $E_2$ have consecutive indices. 
Our instance of DAP has the following $2m+2$ proposals: 
\begin{itemize}
    \item for each $h\in\{1,2\}$ and each edge $e_{h,j}\in E_h$, we introduce two \emph{edge} proposals, denoted by $a_{h,j}$ and $b_{h,j}$;
    \item also, we introduce two \emph{middle} proposals, denoted by $a_\M,b_\M$.
\end{itemize}
We order the proposals as follows:
\begin{align*}
a_{1, 1}&\prec b_{1, 1} \prec a_{1, 2} \prec b_{1, 2} \prec\dots
\prec a_{1, m}\prec a_{2, m}\\
        &\prec a_\M\prec b_\M\prec a_{2, 1}\prec b_{2, 1}\prec\dots\prec 
a_{2, m}\prec b_{2, m}.
\end{align*}

The agents are partitioned into six sets, which we denote by $A,B,C,D,E,F$:
\begin{itemize}
    \item $A$ contains $\kappa$ agents, with $P^{+?}(i)=P$ for each $i\in A$;
    \item $B$ contains $2m+n-\kappa$ agents, with $P^{+!}(i)=P$ for each $i\in B$;
    \item For each vertex in $V$, we create one agent in $C$, one agent in $D$, and one agent in $E$. Fix a vertex $v_i\in V$, and let $e_{1, j}$ and $e_{2, j'}$ be the first edge of $E_1$ and the last edge of $E_2$ incident to $v_i$, respectively. 
    If $v_i$ has degree $3$, let $e_{h,j''}$ be 
    the remaining edge incident to $v_i$. We populate $C$, $D$, $E$ as follows:
    \begin{itemize}
        \item in $C$, we place an agent $c_i$ such that $P^+(c_i)$ is 
        the interval $[a_{1,j}, b_{2,j'}]$ and  
        $P^?(c_i)$ contains $a_{1,j}, b_{1,j}, a_{2,j'}, b_{2,j'}$ and, 
        if $v_i$ has degree 3, also $a_{h,j''}$ and $b_{h,j''}$;
        \item in $D$, we place an agent $d_i$ such that $P^+(d_i)$ is 
        the interval from $a_{1,1}$ (included) to $a_{1,j}$ (excluded), 
        and $P^?(c_i)=\varnothing$;
        \item in $E$, we place an agent $e_i$ such that
        $P^+(e_i)$ is
        the interval from $b_{2, j'}$ (excluded) to $b_{2,m}$ (included),
        and $P^?(i)=\varnothing$.
    \end{itemize}
    \item $F$ consists of the following $2m+1$ agents, with $P^{!}(f)=P$ for each $f\in F$:
    \begin{itemize}
        \item $f_\M$, with $P^+(f_M)=\{a_\M\}$; 
        \item for each edge $e_{h,j}$, two agents 
              $f^1_{h,j}$ and $f^2_{h,j}$, with $P^+(f^1_{h, j})=P^+(f^1_{h, j})=\{a_{h, j}\}$.
    \end{itemize}
\end{itemize}
Let $H=2m+2n$. Altogether, there are $\kappa+(2m+n-\kappa)+3n+2m+1=4m+4n+1= 2H+1$ agents, so $H+1$ agents constitute a majority.

Observe that each proposal is beneficial for each of the $2m+n$ agents in $A\cup B$ and for exactly one agent in $C_i\cup D_i\cup E_i$, for each $v_i\in V$. In addition, $a_\M$ is beneficial for one agent in $F$ (so for $H+1$ agent in total), 
and each $a_{h, j}$, where $h=1, 2$, $j\in [m]$, is beneficial for two agents in $F$ (so for $H+2$ agents in total). Hence, $a_\M$ and all proposals in $\{a_{h, j}: h=1, 2, j\in [m]\}$ are good and all proposals in $\{b_{h, j}: h=1, 2, j\in [m]\}$ are bad. 

Moreover, for each agent $i$ we have $P^?(i)\subseteq P^+(i)$, so all bad proposals are safe. Furthermore, $N^{+?}(a_\M) = A$, 
so $|N^{+?}(a_\M)|=k$. Also, for each $e_{h, j}\in E$ the set $N^{+?}(a_{h, j})$
consists of $\kappa$ agents in $A$ and two agents in $C$ (namely, those that correspond to the endpoints of $e_{h, j}$), so $|N^{+?}(a_{h, j})|=\kappa+2$.
 
We now claim that that we can make all proposals safe 
by deleting at most $4\kappa$ agents if and only if $G$ has a vertex cover of size $\kappa$.

For the `if' direction, let $K$ be a size-$\kappa$ vertex cover of $G$. 
Delete the $4\kappa$ agents in $A \cup \bigcup_{i: v_i \in K}\{c_i,d_i,e_i\}$. 
For each proposal $p$, 
this removes exactly $2\kappa$ agents from $N^+(p)$ and exactly $2\kappa$ agents from $N^-(p)$. Thus, all bad proposals remain safe. Moreover, all agents who are uncertain about $a_\M$ are removed, 
so this proposal becomes safe, too. Now, consider an edge $e_{h,j}$. Since one endpoint of this edge is in $K$, one additional agent who is uncertain about $a_{h,j}$ is removed, so there remains at most one agent in 
$N^{+?}(a_{h, j})$, as well as $H-2\kappa+1$ agents in
$N^{+!}(a_{h, j})$ and $H-2\kappa-1$ agents in 
$N^{-!}(a_{h, j})$. Hence, $a_{h, j}$ becomes safe as well.

For the `only if' direction, suppose that we can make all proposals safe 
by deleting a set $X$ of at most $4\kappa$ agents. Let $N'$ be the set of remaining agents.
Let $K=\{v_i: c_i\in X\}$. We will show that $|K|\le \kappa$ and $K$ is a vertex cover for $G$. Given a pair of proposals $(a_x,b_x)$ (where $x=\M$ or $x=(h, j)$ for $h=1, 2$, $j\in [m]$), 
we define values 
$\alpha_x, \beta_x, \gamma_x, \delta_x$ as follows: 
\begin{align*}
\alpha_x  &= |\{i\in N': a_x, b_x\in P^{+!}(i)\}|;\\
\beta_x   &= |\{i\in N': a_x, b_x\in P^{+?}(i)\}|;\\
\gamma_x  &= |\{i\in N': a_x, b_x\in P^{-!}(i)\}|;\\
\delta_x  &= |\{i\in N': a_x\in P^{+!}(i), b_x\in P^{-!}(i)\}|.
\end{align*} 
Note that $\alpha_x+\beta_x+\gamma_x+\delta_x = 2H+1-|X|$.
After the agents in $X$ are removed, 
$a_x$ is good and safe, whereas $b_x$ is bad and safe, 
so we have the following inequalities:
\begin{align*}
    \alpha_x+\delta_x &\geq \gamma_x+\beta_x +1 \\
    \gamma_x+\delta_x &\geq \alpha_x+\beta_x +1.
\end{align*}
Combining both inequalities yields 
\begin{equation} \label{eq:smallbeta}
    \delta_x \geq \beta_x +1.
\end{equation}
Consider the middle proposals $a_\M$ and $b_\M$. 
Clearly, $\delta_\M\leq 1$ (before deletions, 
there is only one agent, namely, $f_\M$, 
that benefits from $a_\M$, but not from $b_\M$), so by \eqref{eq:smallbeta} $\beta_\M=0$ and $\delta_\M=1$. Thus, all agents in $A$ are deleted, and the agent $f_\M$ is not deleted. 

Let 
\begin{align*}
    x^+ &= |X\cap(A\cup B\cup C)|,\\
    x^- &= |X\cap(D\cup E\cup F\setminus\{f_\M\})|.
\end{align*}
Note that 
$a_\M, b_\M\in P^+(i)$ for each 
$i\in A\cup B\cup C$
and 
$a_\M, b_\M\in P^-(j)$ for each 
$j\in D\cup E\cup F\setminus\{f_\M\}$.
Thus, by removing $X$, we remove $x^+$ agents 
 from each of $N^+(a_\M)$ and $N^+(b_\M)$
 and $x^-$ agents from each of $N^-(a_\M)$ and $N^-(b_\M)$. 
 
 Since $|N^+(a_\M)|=H+1$, and $a_\M$ is safe after deletion, 
 we have $H+1-x^+ > H-x^-$.
 Since $|N^+(b_\M)|=H$, and $b_\M$ is safe after deletion, 
 we have $H+1-x^- > H-x^+$.
 Thus, we have $x^+ -1 < x^- < x^+ +1$
 and hence $x^+=x^-$. Since $|X|\le 4\kappa$, 
 we have $x^+\leq 2\kappa$. Furthermore, 
 we have argued that $\kappa$ agents are deleted from $A$, 
 so at most $\kappa$ agents are deleted from $C$, 
 and hence $|K|\leq \kappa$.

Consider the proposals $a_{h,j},b_{h,j}$ for some edge $e_{h,j}$. By~\eqref{eq:smallbeta}, since $\delta_{h,j}\leq 2$, we have $\beta_{h,j}\leq 1$, i.e., at most one undeleted agent is uncertain about these proposals. If $e_{h,j}$ has endpoints $v_{i}, v_{i'}$, then this means that at least one of the agents $c_i,c_{i'}$ is in $X$, i.e., one of $v_{i},v_{i'}$ is in $K$. We conclude that $K$ is a size-$\kappa$ vertex cover for $G$.
\end{proof}



\section{Proofs Missing from Section~\ref{sec:FPT}}


%
%

\subsection{Proof of Proposition~\ref{prop:fpt-m}}

\begin{proof}\emph{(of Proposition~\ref{prop:fpt-m})}

\paragraph{EAP}
Again, we remove all proposals $p$ with $|N^+(p)|=|N^-(p)|$
and assume that, for each $p\in P$, we have
$|N^+(p)|>n/2$ or $|N^-(p)|>n/2$.

Given an agent $i$, construct a string $\vect = (t_1, \dots, t_m)$ 
over $\{+!, -!, +?, -?\}$, where for each $j\in [m]$ we set 
$t_j=+!$ if $p_j\in P^{+!}(i)$, 
$t_j=-!$ if $p_j\in P^{-!}(i)$, 
$t_j=+?$ if $p_j\in P^{+?}(i)$, and
$t_j=-?$ if $p_j\in P^{-?}(i)$.
We will refer 
to $\vect$ as the {\em type} of $i$.
Let $T = \{+!, -!, +?, -?\}^m$.
By construction, there are at most $4^m$
distinct agent types. 

We say that a type $\vect$ is {\em unconfused}
if $t_j\in \{+!, -!\}$ for each $j\in [m]$.
Let $T'$ be the set of all unconfused types.
We now define a function $f$ that describes how 
an agent's type changes after she has been educated.
Given a type $\vect\in T$, we write $\vect' = f(\vect)$
if $\vect'\in T'$ and the following conditions hold for each $j\in [m]$:
(1) if $t_j\in\{+!, -!\}$ then $t'_j=t_j$;
(2) if $t_j=+?$ then $t'_j=+!$;
(3) if $t_j=-?$ then $t'_j=-!$.
By construction, if we educate an agent of type $\vect$, 
her type becomes $f(\vect)$.

For each $\vect\in T$, 
let $y_\vect$ denote the number of agents of type $\vect$ in $I$, and let $x_\vect$ denote the number of agents of type $\vect$ in the instance $I'$ obtained after some agents have been educated. Note that, as we educate agents, the number of agents of each type $\vect\in T\setminus T'$ does not increase. Moreover, the increase in the number of agents of some type $\vect'\in T'$ is equal to the decrease in the total number of agents that belong to one of the types $\vect\in T\setminus T'$ with $f(\vect)=\vect'$; in other words, for each $\vect'\in T'$, the total number of agents that belong to
one of the types $\vect$ with $f(\vect)=\vect'$ does not change.
Moreover, we can educate at most $\kappa$ agents.
These constraints are captured as follows:
\begin{align}\label{eq:eap-count}
    0 &\le x_\vect\le y_\vect\text{ for all }\vect\in T\setminus T'\\
    \sum_{\vect\in T: f(\vect)=\vect'}(y_\vect-x_\vect) &= 0\quad\text{for each $\vect'\in T'$.}\nonumber\\
    \sum_{\vect\in T'}(x_\vect-y_\vect) &\le \kappa\nonumber
\end{align}
For each proposal $p_j\in P$, let 
$T^{+!}(j) = \{\vect\in T: t_j=`+?'\}$, 
$T^{-!}(j) = \{\vect\in T: t_j=`-?'\}$, 
$T^{?}(j) = \{\vect\in T: t_j\in\{`+?', `-?'\}$.
We will now proceed as in the proof for DAP:
  for each proposal, we define conditions that ensure
  that the outcome of this proposal is $1$ or $0$, 
  and then define binary variables that
  indicate which proposals are safe.
The conditions are essentially the same as in the proof for DAP;
we reproduce them here for readability.

Given a proposal $p_j$, let
\begin{align}\label{eq:eap-ab}
a_j &= 
\sum_{\vect\in T^{+!}(j)}x_\vect - \sum_{\vect\in T^{-!}(j)}x_\vect -  \sum_{\vect\in T^{?}(j)}x_\vect; \\
b_j &= 
\sum_{\vect\in T^{-!}(j)}x_\vect - \sum_{\vect\in T^{+!}(j)}x_\vect -  \sum_{\vect\in T^{?}(j)}x_\vect.\nonumber
\end{align}
For each $j\in [m]$ we introduce a variable $z_j$
that takes values in $\{0, 1\}$, and add constraints
\begin{align*}
n\cdot z_j \ge a_j, \qquad n\cdot(1-z_j)\ge 1-a_j\qquad\qquad(\text{$j$-good})
\end{align*}
if $p_j$ is good and
\begin{align*}
n\cdot z_j \ge b_j, \qquad n\cdot(1-z_j)\ge 1-b_j\qquad\qquad(\text{$j$-bad})
\end{align*}
if $p_j$ is bad.
Condition ($j$-good)
ensures that $z_j=1$ if and only if $a_j>0$, whereas
condition ($j$-bad) ensures that $z_j=1$ if and only if $b_j>0$.
Thus, our goal is to maximize $\sum_{j\in m}z_j$.

To summarize, the set of variables of our ILP for EAP is
$\{x_\vect: \vect\in T\}\cup\{a_j, b_j, z_j: j\in [m]\}$, 
its objective function is
$$
\max \sum_{j\in [m]} z_j, 
$$
and
the set of constraints consists of~\eqref{eq:eap-count}, \eqref{eq:eap-ab}, 
and, for each $j\in [m]$, one of the constraints ($j$-good) or ($j$-bad), 
depending on whether $p_j$ is good or bad and the constraint $0\le z_j\le 1$.

\paragraph{CLD}
First, construct a \textit{delegation graph} $G = (V, E)$, where each of the $3^m$ vertices in $V$ corresponds to a type of voter, and the edges are the pairwise delegations possible as defined in CLD (i.e., $(u,v) \in E \Leftrightarrow |v^?| < |u^?|$). For each $i \in [3^m]$, let $n_i$ be the number of agents of type $i$. Enumerate all possible paths into the set $P$ (this can be done in $\mathcal{O}((3^m-2)!)$).
    
    Let $P_i \subseteq P$ denote the tree with root at node $i$, constructed from all paths with source $i$. Note that this tree may have repeated edges.
    Let a path $p_{ij}^\ell \in P_i$ denote the delegation from a voter of type $i$ to another voter of type $j$ (eventually, the destination) via the $\ell^\text{th}$ branch in the tree. A path is a set of edges in $E$.
    
    For each $p_{ij}^\ell \in P$, create $m$ constants: $p_{ij1}^\ell, p_{ij2}^\ell, \dots, p_{ijm}^\ell$, where each variable $p_{ijk}^\ell \in \{-1,0,1\}$ for $k \in [m]$. This represents the number of votes one particular delegation of the form $p_{ij}^\ell$ will affect proposal $k$ in terms of the overall votes it receives. 
    
Next, for each proposal $k \in [m]$, create a variable $y_k \in \mathbb{Z}_{\geq 0}$ indicating how many initial voters' decision on proposal $k$ is correct. This can be done in $\mathcal{O}(nm)$.
    
    Let $x_{ij}^\ell$ be the number of delegations we make along path $p_{ij}^\ell$. The first constraint is that $\sum_{j,\ell} x_{ij}^\ell \leq n_i$, ensuring that the number of delegations never exceeds the number of agents of that type. Also, to ensure that there are sufficient number of voters of each type to split votes, it must hold that the number of non-zero weighted edges emerging from a node is less than the number of voters at that node. To model this, we create an indicator variable:
    \begin{equation}
        \sum_{e \in P_i} \mathbb{I}_{e>0} \leq n_i
    \end{equation}
    
    Then, for each proposal $k$, let
    \begin{equation}
        c_k = y_k + \sum_{i,j,\ell:p_{ij}^\ell \in P} (x_{ij}^\ell\times p_{ijk}^\ell)
    \end{equation}
    Intuitively, for each proposal $k$, the first term $y_k$ is the initial number of votes it has on the correct side, and the second term is the change to the number of votes for this particular project $k$ after our selection of delegations. So $c_k$ equals the number of (correct) votes that project $k$ has after delegation.
    
    Let the safety margin (towards the correct decision) of project $k$ be:
        \begin{equation}
            a_k = c_k - (n - c_k)
        \end{equation}
    Observe that $a_k \geq 0$ means that the project is safe, whereas $a_k < 0$ means that the project is not safe.
    
    Define a set of binary variables $\xi_1, \xi_2,\dots,\xi_m \in \{0,1\}$ such that $\xi_j = 1$ if and only if proposal $j$ is safe. Set the optimization objective for the ILP to be:
      $\max \sum_{i} \xi_i$.
    
    Define additional auxiliary variables $w_k \in \set{0,1}$.
    The constraint is as follows (there should be another constraint that makes sure number of delegations $x_{ij}$ should not exceed the number of agents of type $i$; again, can be pre-computed in pre-processing step): 
    \begin{equation}
            \begin{split}
                       \; \forall k : a_k \geq -\brackets{n+1}w_k + 1&, \; 0\leq \xi_k \leq 1-w_k
                    \end{split}
    \end{equation} 
    
    If $a_k \leq 0$, then this indicates that the proposal does not have sufficient (correct) votes to pass the threshold, and $w_k$ shall be $1$. This in turns leads to $\xi_k = 0$. If $a_k > 0$, then this indicates that the proposal has sufficient (correct) votes to pass the threshold, and hence $w_k = 0$, which makes it possible to set $\xi_k = 1$.
    
    In summary, the ILP has the objective function
    \begin{equation*}
        \max \sum_{i=1}^m \xi_i
    \end{equation*}
    subject to the following constraints:
    \begin{equation}
        \begin{split}
            (i)\; \forall i \in V : \sum_{j,\ell}x_{ij}^\ell \leq n \\
            (ii) \; \forall e \in E : e = \sum_{e \in p_{ij}^\ell} x_{ij}^\ell
            \\
            (iii)\; \forall i \in V : \sum_{e \in P_i} \mathbb{I}_{e>0} \leq n_i
            %
            \\
           (iv)\; \forall k : a_k \geq -\brackets{n+1}w_k + 1&
            \\
           (v)\; \forall k : 0\leq \xi_k \leq 1-w_k.
        \end{split}
    \end{equation}
This completes the proof.
\end{proof}

\begin{figure*}
    \centering
    \begin{tikzpicture}[every node/.style={draw, circle}]
    \node[fill=yellow] (u) at (0,0) {u};
    \node[fill=yellow] (y)at (-1,.5) {y};
    \node (z) at (-1,-.5) {z};
    \node (x) at (1,0) {x};    
    \draw (u)--(y)--(z)--(u)--(x);
    
    \end{tikzpicture}
    \vspace{0.3in}
    
\includegraphics[]{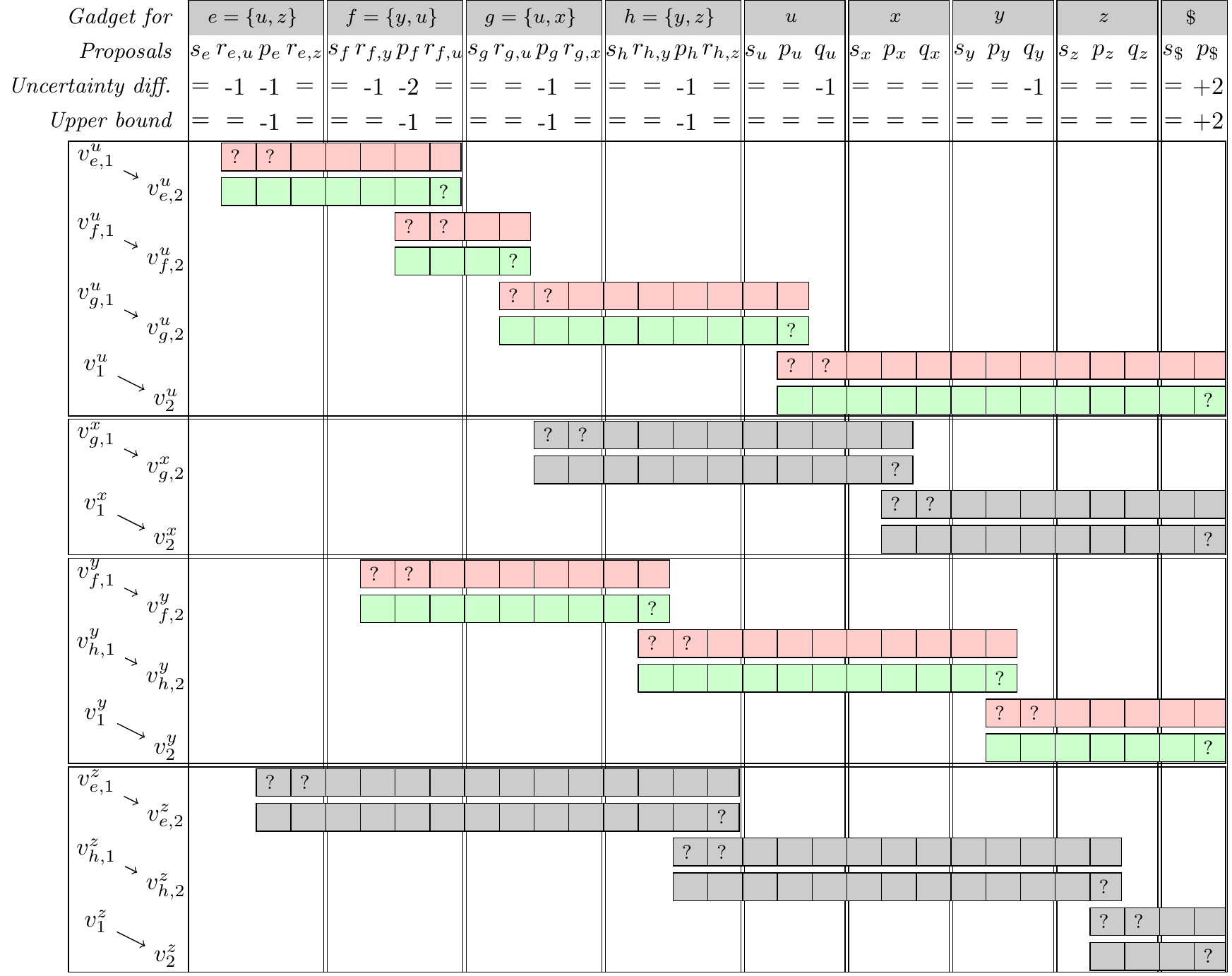}
    \caption{An example of the NP-hardness reduction for CLD in the one-dimensional proposal space (Theorem \ref{thm:1D-CLD_is_hard}), from the instance of Vertex Cover on a four-vertex graph depicted on top. Only coding agents are depicted in the table, which describes the instance constructed by the reduction. A positive opinion for each agent on each proposal is marked with a colored or gray box, while a negative opinion is left blank (due to the restriction on the one-dimensional domain we are considering, colored and gray boxes necessarily form intervals for each agent). Uncertain positions are marked with a question mark. A possible solution (corresponding to the size-$2$ vertex cover $\{u,y\}$ of the original graph) corresponds to having the red (top rows) agents delegate to the corresponding green ones (bottom rows). The number of positive or negative opinions remains unchanged, and there is a net decrease or increase of uncertainty in each column computed in the \emph{Uncertainty diff.} row (where `=' corresponds to a difference of $0$). It can be checked that each difference remains below the upper-bound for each column, as set by balancing agents. The crux of the reduction is that, for each proposal $p_e$, some agent $v^u_{e,1}$ must delegate to $v^u_{e,2}$ to decrease the uncertainty (which corresponds to picking an endpoint of the edge in the vertex cover), but this triggers a chain of delegations resulting in a cost of $1$ in the final proposal; thus, bounding the size of the cover.}
    \label{fig:cldnpfig}
\end{figure*}

\end{document}